\title[The ADHM construction and non-local symmetries]
{The ADHM construction and non-local symmetries of the self-dual Yang--Mills equations}
\author{James~D.E.\ Grant}
\date{22 December, 2009}
\numberwithin{equation}{section}
\theoremstyle{plain}
\newtheorem{theorem}{Theorem}[section]
\newtheorem{corollary}[theorem]{Corollary}
\newtheorem{proposition}[theorem]{Proposition}
\newtheorem{lemma}[theorem]{Lemma}
\theoremstyle{definition}
\newtheorem*{notation}{Notation}
\theoremstyle{remark}
\newtheorem{remark}[theorem]{Remark}
\newcommand{\sdyme}{self-dual Yang--Mills equations}
\newcommand{\cpone}{\ensuremath{\mathbb{C}P^1}}
\newcommand{\cptwo}{\ensuremath{\mathbb{C}P^2}}
\newcommand{\cpthree}{\ensuremath{{\mathbb{C}}P^3}}
\newcommand{\hpone}{\ensuremath{\mathbb{H}P^1}}
\newcommand{\Vzeroeps}{\ensuremath{\mathcal{V}^0_{\epsilon}}}
\newcommand{\Vinfeps}{\ensuremath{\mathcal{V}^{\infty}_{\epsilon}}}
\newcommand{\Veps}{\ensuremath{\mathcal{V}_{\epsilon}}}
\newcommand{\SUtwo}{\ensuremath{\mathrm{SU}_2}}
\newcommand{\sutwo}{\ensuremath{\mathfrak{su}_2}}
\newcommand{\SLtwoC}{\ensuremath{\mathrm{SL}_2(\mathbb{C})}}
\newcommand{\sltwoC}{\ensuremath{\mathfrak{sl}_2(\mathbb{C})}}
\newcommand{\Id}{\ensuremath{\mathrm{Id}}}
\newcommand\ubar{\ensuremath{\overline{u}}}
\newcommand\vbar{\ensuremath{\overline{v}}}
\newcommand\vs{\vskip .2cm}
\newcommand\bracket[2]{\ensuremath{\left\{ #1 \, \left| \vphantom{|^|} \, #2 \right.\right\}}}
\newcommand\commentout[1]{}
\begin{document}
\dedicatory{To Nicola Ramsay}
\thanks{This work was supported by START-project Y237--N13 of the Austrian Science Fund and a Visiting Professorship at the University of Vienna. The author is grateful to the anonymous referee, whose detailed comments led to the significant improvement of this paper.}
\address{\href{http://www.mat.univie.ac.at/home.php}{Fakult{\"a}t f{\"u}r Mathematik} \\
\href{http://www.univie.ac.at/de/}{Universit{\"a}t Wien} \\ Nordbergstrasse 15 \\ 1090 Wien \\ Austria}
\email{\href{mailto:james.grant@univie.ac.at}{james.grant@univie.ac.at}}

\begin{abstract}
We consider the action on instanton moduli spaces of the non-local symmetries of the self-dual Yang--Mills equations on $\mathbb{R}^4$ discovered by Chau and coauthors. Beginning with the ADHM construction, we show that a sub-algebra of the symmetry algebra generates the tangent space to the instanton moduli space at each point. We explicitly find the subgroup of the symmetry group that preserves the one-instanton moduli space. This action simply corresponds to a scaling of the moduli space.
\end{abstract}
\maketitle
\thispagestyle{empty}

\section{Introduction}
The {\sdyme} have been investigated from two rather distinct points of view in the last few decades. The first direction is in the study of the topology of four-manifolds, and the work of Donaldson (see, e.g.,~\cite{DK,FU}). In this approach, a fundamental role is played by the analysis of moduli spaces of solutions of the {\sdyme} with $L^2$ curvature (so-called \lq\lq instanton solutions\rq\rq) on a given four-manifolds. The analysis of such moduli-spaces then yields powerful information concerning differentiable structures on the underlying four-manifold. The second, seemingly unrelated, development is in the theory of integrable systems. In particular, it has been shown that many known integrable systems of differential equations may be derived as symmetry reductions of the {\sdyme} (see, e.g.,~\cite{MW}).

The purpose of the current paper, and its companion~\cite{reducible} which studies reducible connections, is to investigate whether properties of the {\sdyme} that follow from its integrable nature may be used to yield global information about instanton moduli spaces. In particular, it is known that the {\sdyme} on $\mathbb{R}^4$ admit an infinite-dimensional algebra of non-local symmetries~\cite{CGSW,CGW,CW}. In this paper, we investigate the action of these symmetries on the instanton-moduli spaces on $\mathbb{R}^4$ and, in particular, investigate, on the one-instanton moduli space, the sub-algebra of symmetries that preserve the $L^2$ condition on the curvature of the connection. Thinking of such symmetries as generating flows on the moduli space of all self-dual connections, $\mathcal{M}$, and of the $k$-instanton moduli space, $\mathcal{M}_k$, as a subspace of $\mathcal{M}$, then it is known that the non-local symmetries in general do not lie tangent to the subspaces $\mathcal{M}_k$ and, therefore, do not preserve the $L^2$ nature of the curvature (see, e.g.,~\cite[Chapter~V]{Cr}, but also Remark~\ref{counterexample} below). Our results are rather double-edged. In Theorem~\ref{TMk}, we show that the full tangent space to the moduli spaces $\mathcal{M}_k$ is generated by the fundamental vector fields of the symmetry algebra acting on the moduli space of self-dual connections $\mathcal{M}$. When we attempt to \lq\lq exponentiate\rq\rq\ these tangent vectors into a group action on $\mathcal{M}_k$, however, our conclusion is that the family of transformations that preserves the $L^2$ nature of the connection is rather small. In particular, the symmetries have orbits of rather high codimension in the moduli spaces. More specifically, in Theorem~\ref{scaling}, we deduce that the only symmetries of the {\sdyme} that act on five-dimensional one-instanton moduli space $\mathcal{M}_1$ correspond to a scaling of the instanton solutions. Such a collapse to orbits of large codimension is not unfamiliar from the theory of harmonic maps into Lie groups~\cite{AJS,GO,JS,UhlenbeckJDG}, where one has similar non-local symmetry algebras~\cite{D}.

\vs
The paper is organised as follows. In Section~\ref{sec:prelim}, we summarise the relevant background material that we require from both the integrable systems approach to the {\sdyme} and the ADHM approach to the instanton problem. Since our considerations are aimed at making a connection between the local aspects of the {\sdyme} and the global aspects, and the literature in these fields generally have completely different notation, it was deemed necessary to give an integrated, relatively detailed description of both approaches in a consistent notation. This accounts for the length of this section\footnote{For more information on the local aspects of the {\sdyme} that are relevant to us, see~\cite[Chapters~II \&~III]{Cr}. For more information concerning the ADHM formalism see either~\cite{ADHM} or~\cite[Chapters~II-IV]{At1}.}. In Section~\ref{sec:patchingADHM}, we show how the ADHM construction may be used to yield explicit patching matrices for holomorphic bundles over subsets of {\cpthree}, to which we may apply the results of~\cite{Cr} concerning the action of the symmetry algebra of the {\sdyme}. In Section~\ref{sec:deform}, we show that one-parameter families of ADHM data yield transformations that fall into the category of transformations considered in~\cite{Cr}, with the important proviso that these transformations are significantly restricted by the assumption that they are generated by flows on the full moduli space of self-dual connections. In Section~\ref{sec:one}, we show that our constructions can be carried out explicitly on the one-instanton moduli space, and that the only symmetries (consistent with a particular technical assumption) that have a well-defined action on the one-instanton moduli space are scalings. Finally, in an Appendix, we give a direct derivation of the action of the non-local symmetries of the {\sdyme} on the twistorial patching matrix from the action on the self-dual connection.

\vs
Finally, note that the symmetries that we investigate can also be constructed, by the same methods, on hyper-complex manifolds. Since we wish to consider symmetries that generalise to manifolds other than $\mathbb{R}^4$, and there exist hyper-complex manifolds with no continuous (conformal) isometries, we will not consider symmetries (such as those discussed in~\cite{PP}) that follow from the existence of a non-trivial conformal group on our manifold.

\section{Preliminaries}
\label{sec:prelim}
\subsection{Quaternions and twistor spaces}
We will deal exclusively with the {\sdyme} on $\mathbb{R}^4$ and $S^4$, and make constant use of isomorphisms $\mathbb{R}^4 \cong \mathbb{C}^2 \cong \mathbb{H}$, which we first fix. As such, let $\mathbf{x} := (x^1, x^2, x^3, x^4) \in \mathbb{R}^4$. We may then view
\[
u := \left( x^1 + i x^2 \right), \qquad
v := \left( x^3 - i x^4 \right)
\]
as coordinates on $\mathbb{C}^2$ and defining an isomorphism $\mathbb{R}^4 \rightarrow \mathbb{C}^2; \mathbf{x} \mapsto (u, v)$. In terms of these coordinates, the flat metric on $\mathbb{R}^4$ takes the form
\[
\mathbf{g} = \frac{1}{2} \left( du \otimes d{\ubar} + d{\ubar} \otimes du +
dv \otimes d{\vbar} + d{\vbar} \otimes dv \right).
\]
and the corresponding volume form is
\[
\epsilon = dx^1 \wedge dx^2 \wedge dx^3 \wedge dx^4 =
\frac{1}{4} du \wedge d{\ubar} \wedge dv \wedge d{\vbar}.
\]

Let $P \rightarrow \mathbb{R}^4$ be a principal {\SUtwo} bundle\footnote{We restrict to {\SUtwo} for simplicity. All of our considerations are equally valid for any classical Lie group.} over $\mathbb{R}^4$, and $E \rightarrow \mathbb{R}^4$ the rank-two complex vector bundle associated to $P$ via the fundamental representation of {\SUtwo}. (We will switch between the principal bundle and vector bundle pictures without comment.) An {\SUtwo} connection on $E$, $\mathbf{A} \in \Omega^1(\mathbb{R}^4, \sutwo)$ is a solution of the {\sdyme} if its curvature satisfies
\[
\star \mathbf{F} = \mathbf{F}.
\]
In terms of the complex coordinates introduced above, this equation is equivalent to
\begin{equation}
F_{uv} = 0, \qquad
F_{u {\ubar}} + F_{v {\vbar}} = 0, \qquad
F_{{\ubar}{\vbar}} = 0.
\label{sdym}
\end{equation}

We introduce complex vector fields $\mathbf{X}(z), \mathbf{Y}(z) \in C^{\infty}(\mathbb{R}^4, TM \otimes \mathbb{C})$ depending on a parameter $z \in \mathbb{C} \cup \{ \infty \} \equiv \cpone$:
\begin{equation}
\mathbf{X}(z) := \partial_{\ubar} - z \partial_v, \qquad
\mathbf{Y}(z) := \partial_{\vbar} + z \partial_u.
\label{XY}
\end{equation}
Then equations~\eqref{sdym} are equivalent to the requirement that
\begin{equation}
\mathbf{F} \left( \mathbf{X}(z), \mathbf{Y}(z) \right) = 0, \qquad \forall z \in \cpone.
\label{FXY}
\end{equation}

\vs

Since $\mathbb{R}^4 \subset \mathbb{R}^4 \cup \{ \infty \} \cong S^4 \cong \hpone$, a point $\mathbf{x} \in \mathbb{R}^4 \cong \mathbb{C}^2$ determines a quaternionic line in $\mathbb{H}^2$. In particular, we define $x := u + j v \in \mathbb{H}$. In terms of homogeneous coordinates $(p, q) \in \mathbb{H}^2$ on {\hpone}, the point $\mathbf{x}$ determines the quaternionic line
\[
l_{\mathbf{x}} := \bracket{(q, p) \in \mathbb{H}^2}{q = x p}
\]
in $\mathbb{H}^2$. Now, let $p = z_1 + j z_2$, $q = z_3 + j z_4$ with $\mathbf{z} := (z_1, \dots, z_4) \in \mathbb{C}^4$, and view $\mathbf{z}$ as homogeneous coordinates on {\cpthree}. Right-multiplication by $j$ on $\mathbb{H}^2$ defines an anti-linear anti-involution
\begin{equation}
\sigma\colon \mathbb{C}^4 \rightarrow \mathbb{C}^4; \qquad
\left( z_1, z_2, z_3, z_4 \right) \mapsto \left( -\overline{z_2}, \overline{z_1}, -\overline{z_4}, \overline{z_3} \right),
\label{C4involution}
\end{equation}
which descends to define an involution on the projective space:
\[
\sigma\colon \cpthree \rightarrow \cpthree; \qquad
\left[ z_1, z_2, z_3, z_4 \right] \mapsto \left[ -\overline{z_2}, \overline{z_1}, -\overline{z_4}, \overline{z_3} \right].
\]
The image of the quaternionic line $l_{\mathbf{x}}$ in {\cpthree} corresponding to $\mathbf{x} \in \mathbb{R}^4$ is given by the embedded projective line
\begin{equation}
L_{\mathbf{x}} \equiv L_{(u, v)} = \bracket{\left[ z_1, z_2, z_3, z_4 \right] \in \cpthree}{z_3 = z_1 u - z_2 {\vbar}, z_4 = z_1 v + z_2 {\ubar}}.
\label{Lx}
\end{equation}
Similarly, the embedded line corresponding to the point $\infty \in S^4$ is
\[
L_{\infty} :=
\bracket{\left[ 0, 0, z_3, z_4 \right]}{(z_3, z_4) \in \mathbb{C}^2 \setminus \left\{ (0, 0) \right\}}.
\]
The lines $L_p$, for $p \in S^4$, are invariant under the action of $\sigma$, and are referred to as \emph{real lines}. We will make particular use of the projection
\[
\pi\colon \cpthree \setminus L_{\infty} \rightarrow \mathbb{R}^4;
\qquad L_{\mathbf{x}} \to \mathbf{x}.
\]
On a fixed real line, $L_{\mathbf{x}}$, $\mathbf{x} \in \mathbb{R}^4$, we introduce the affine coordinate $z = z_2/z_1 \in \cpone$.

Finally, on the subset $\mathcal{U}_1 := \bracket{[z_1, z_2, z_3, z_4] \in \cpthree}{z_1 \neq 0}$, we may introduce coordinates $w_1 := z_3/z_1$, $w_2 := z_4/z_1$, $w_3 := z_2/z_1 \equiv z$. By definition, the coordinates $(w_1, w_2, w_3)$, viewed as functions on $\mathcal{U}_1$, are holomorphic with respect to the complex structure that $\mathcal{U}_1$ inherits as an open subset of {\cpthree}. From equation~\eqref{Lx}, the intersection $L_{\mathbf{x}} \cap \mathcal{U}_1$ consists of the set of points with $(w_1, w_2, w_3) = (u - z \vbar, v + z \ubar, z)$.  We will therefore often view $(u, v, z)$ as coordinates on the set $\mathcal{U}_1 \cong \mathbb{C}^2 \times \mathbb{C}$, with the functions $(u - z \vbar, v + z \ubar, z)$ being holomorphic with respect to the complex structure on $\mathcal{U}_1$. One may then check that, in this coordinate system, a basis for anti-holomorphic vector fields on the set $\mathcal{U}_1$ is given by the vector fields $\{ \mathbf{X}(z), \mathbf{Y}(z), \partial_{\overline{z}} \}$, with $\mathbf{X}(z), \mathbf{Y}(z)$ as in equation~\eqref{XY}. A similar argument may be carried out on the set $\mathcal{U}_2 := \bracket{[z_1, z_2, z_3, z_4] \in \cpthree}{z_2 \neq 0}$. In practice, the construction on $\mathcal{U}_2$ means that we may use the formulae for $\mathbf{X}(z), \mathbf{Y}(z)$ with $z$ taking values in {\cpone}. As such, we will often, henceforth, identify the set $\cpthree \setminus L_{\infty}$ with the set $\mathbb{C}^2 \times \cpone$. When we speak of a function being, for example, \lq\lq holomorphic\rq\rq\ on $\mathbb{C}^2 \times \mathbb{C} \subset \mathbb{C}^2 \times \cpone$, we will mean holomorphic on the set $\mathcal{U}_1$ with the induced complex structure mentioned above\footnote{From a {\cpone} point of view, we are viewing $\mathcal{U}_1 \cup \mathcal{U}_2 = \cpthree \setminus L_{\infty}$ as the total space of the normal bundle $\mathcal{O}(1) \oplus \mathcal{O}(1)$ of the rational curve $L_{\mathbf{0}} \subset \cpthree \setminus L_{\infty}$, where $\mathbf{0}$ denotes the origin in $\mathbb{R}^4$. $\mathbf{X}$ and $\mathbf{Y}$ are then linearly independent sections of this normal bundle. Unfortunately, this picture is not particularly well-suited to the calculations that we wish to perform.}.

\begin{notation}
Given $\epsilon > 0$, we define the following open subsets of {\cpone}:
\[
\Vzeroeps := \bracket{z \in \cpone}{|z| < 1 + \epsilon}, \qquad
\Vinfeps := \bracket{z \in \cpone}{|z| > \frac{1}{1 + \epsilon}},
\]
and their intersection
\[
\Veps := \Vzeroeps \cap \Vinfeps = \bracket{z \in \cpone}{\frac{1}{1 + \epsilon} < |z| < 1 + \epsilon}.
\]
We define the involution on the projective line $\sigma\colon \cpone \rightarrow \cpone; z \mapsto - \left. 1 \right/\!\overline{z}$ which, geometrically, is simply the anti-podal map. Note that $\sigma(\Vzeroeps) = \Vinfeps$ and $\sigma(\Veps) = \Veps$. Given any subset $\mathcal{V} \subset \cpone$ and a map $g\colon \mathcal{V} \rightarrow \SLtwoC$, we define a corresponding map $g^*\colon \sigma(\mathcal{V}) \rightarrow \SLtwoC$ by
\[
g^*(z) := \left( g(\sigma(z)) \right)^{\dagger}.
\]
(Throughout, $\dagger\colon \SLtwoC \rightarrow \SLtwoC$ will denote complex-conjugate transpose.) Similarly, given any map $f\colon U \times \mathcal{V} \rightarrow \SLtwoC$, we define a corresponding map $f^*\colon U \times \sigma(\mathcal{V}) \rightarrow \SLtwoC$ by
\[
f^*(x, z) := \left( f(x, \sigma(z)) \right)^{\dagger}.
\]
\end{notation}

\subsection{Holomorphic bundles}
\label{sec:patching}
An important property of the {\sdyme} is that they are the compatibility condition for the following overdetermined system of equations~\cite[Theorem~1]{Cr}
\begin{subequations}
\begin{align}
\left( \partial_{\ubar} - z \partial_v \right) \Psi(x, z) &=
- \left( A_{\ubar} - z A_v \right) \Psi(x, z),
\\
\left( \partial_{\vbar} + z \partial_u \right) \Psi(x, z) &=
- \left( A_{\vbar} + z A_u \right) \Psi(x, z),
\\
\partial_{\overline z} \Psi(x, z) &= 0,
\end{align}\label{ALP}\end{subequations}
for a map $\Psi\colon \mathbb{R}^4 \times \mathcal{V} \rightarrow \SLtwoC$, where $\mathcal{V}$ is a subset of {\cpone}. In particular, given $\epsilon > 0$, there exists a solution, $\Psi_0\colon \mathbb{R}^4 \times \Vzeroeps \rightarrow \SLtwoC$ that is analytic in $z$ for $z \in \Vzeroeps$. This solution is unique up to right multiplication
\[
\Psi_0(x, z) \rightarrow \widetilde{\Psi}_0(x, z) := \Psi_0(x, z) R(u - z\vbar, v + z \ubar, z),
\]
where $R\colon \mathbb{C}^2 \times \Vzeroeps \rightarrow \SLtwoC$ is holomorphic with respect to the complex structure that $\mathbb{C}^2 \times \Vzeroeps$ inherits as a subset of $\mathcal{U}_1$. Given $\Psi_0(x, z)$, we define
\[
\Psi_{\infty}(x, z) := \left( \Psi_0^*(x, z) \right)^{-1}.
\]
It is straightforward to check that $\Psi_{\infty}(x, z)$ is also a solution of~\eqref{ALP} that is analytic in $z$ on $\mathbb{R}^4 \times \Vinfeps$. Defining the fields
\[
\psi_0(x) := \Psi_0(x, 0), \qquad \psi_{\infty}(x) := \Psi_{\infty}(x, \infty),
\]
then equations~\eqref{ALP} imply that we may write the components of the connection in the form
\begin{subequations}
\begin{gather}
A_u = - \left( \partial_u \psi_{\infty}(x) \right) \psi_{\infty}(x)^{-1}, \qquad
A_v = - \left( \partial_v \psi_{\infty}(x) \right) \psi_{\infty}(x)^{-1}, \\
A_{\ubar} = - \left( \partial_{\ubar} \psi_0(x) \right) \psi_0(x)^{-1}, \qquad
A_{\vbar} = - \left( \partial_{\vbar} \psi_0(x) \right) \psi_0(x)^{-1}.
\end{gather}\label{A}\end{subequations}
We then define the Yang $J$-function $J\colon \mathbb{R}^4 \rightarrow \SLtwoC$ by
\begin{equation}
J := \psi_{\infty}(x)^{-1} \cdot \psi_0(x).
\label{Jdefn}
\end{equation}
Noting that, from the definition of $\Psi_{\infty}$, we have $\psi_{\infty}(x) = \left( \psi_0(x)^{\dagger} \right)^{-1}$, it then follows that $J^{\dagger} = J$. A short calculation shows that the remaining part of the anti-self-dual part of the curvature may be written in the form
\begin{align*}
F_{u\ubar} + F_{v\vbar}
&= - \psi_{\infty} \left[ \partial_u \left( J_{\ubar} J^{-1} \right) + \partial_v \left( J_{\vbar} J^{-1} \right) \right] \psi_{\infty}^{-1}
\\
&=  - \psi_0 \left[ \partial_{\ubar} \left( J^{-1} J_u \right) + \partial_{\vbar} \left( J^{-1} J_v \right) \right] \psi_0^{-1}.
\end{align*}
If the connection, $\mathbf{A}$, satisfies the {\sdyme} it therefore follows that the field $J$ satisfies the \emph{Yang--Pohlmeyer equation}
\begin{equation}
\partial_u \left( J_{\ubar} J^{-1} \right) +
\partial_v \left( J_{\vbar} J^{-1} \right) = 0.
\label{Jeqn}
\end{equation}
Conversely, given $J\colon \mathbb{R}^4 \rightarrow \SLtwoC$ that satisfies the Yang--Pohlmeyer equation and admits a splitting of the form~\eqref{Jdefn} for some $\psi_0$ and $\psi_{\infty}$ such that $\psi_{\infty} = \left( \psi_0^{-1} \right)^{\dagger}$, then the connection constructed as in Equations~\eqref{A} will satisfy the {\sdyme}.

Finally, the quantity
\begin{equation}
G(x, z) := \left( \Psi_{\infty}(x, z) \right)^{-1} \cdot \Psi_0(x, z),
\label{G}
\end{equation}
will be referred to as the \emph{patching matrix}. It defines a holomorphic map $\mathbb{C}^2 \times \Veps \subset \cpthree \setminus L_{\infty} \rightarrow \SLtwoC$, and hence the transition function of a holomorphic vector bundle over $\cpthree \setminus L_{\infty}$. The splitting~\eqref{G} implies that this bundle is trivial on restriction to real lines. The above is an explicit version of the Ward correspondence~\cite{Wa}, which defines a $1-1$ correspondence between self-dual Yang--Mills fields and holomorphic bundles over appropriate subsets of {\cpthree} that are trivial on restriction to real lines\footnote{See~\cite{Cr} for more details of the Ward correspondence from this point of view.}. Given such a bundle, the transition functions necessarily admit a splitting of the form~\eqref{G}, and the connection may then be reconstructed from the resulting fields $\Psi_0, \Psi_{\infty}$ via equations~\eqref{A}.

\subsection{Non-local symmetries of the {\sdyme}}
If we consider a one-parameter family of solutions, $J(t)$, of~\eqref{Jeqn}, depending in a $C^1$ fashion on a parameter $t \in (-\varepsilon, \varepsilon)$ then we deduce that $\dot{J} := \frac{d}{dt} J(t)$ must satisfy the linearisation of~\eqref{Jeqn}:
\begin{equation}
\partial_u \left( J \partial_{\ubar} \left( J^{-1} \dot{J} \right) J^{-1} \right) +
\partial_v \left( J \partial_{\vbar} \left( J^{-1} \dot{J} \right) J^{-1} \right) = 0.
\label{linearisation}
\end{equation}
Such a $\dot{J}$ defines a symmetry of the {\sdyme}. It is known that the only local symmetries\footnote{i.e. depending only on the connection and its derivatives pointwise} of the {\sdyme} on flat $\mathbb{R}^4$ are gauge transformations and those generated by the action of the conformal group (see, e.g., \cite{Po}). On the other hand, there exists a non-trivial family of non-local symmetries of the {\sdyme}~\cite{CGSW,CGW,CW}. To describe these, we define the auxiliary maps $\chi_0\colon \mathbb{R}^4 \times \Vzeroeps \rightarrow \SLtwoC$, $\chi_{\infty}\colon \mathbb{R}^4 \times \Vinfeps \rightarrow \SLtwoC$
\[
\chi_0(x, z) := \psi_0(x)^{-1} \cdot \Psi_0(x, z), \qquad
\chi_{\infty}(x, z) := \psi_{\infty}(x)^{-1} \cdot \Psi_{\infty}(x, z),
\]
which are analytic in $z$ for $z \in \Vzeroeps$ and $z \in \Vinfeps$, respectively. These maps have the property that $\chi_0(x, 0) = \chi_{\infty}(x, \infty) = \mathrm{Id}$.

\vs
The following result, based on the results of~\cite{CGSW,CGW,CW}, may then be extracted from Section~III.A of~\cite{Cr}:

\begin{proposition}
Let $T\colon \mathbb{R}^4 \times S^1 \rightarrow \SLtwoC$ be a map that extends continuously to a map $T\colon \mathbb{R}^4 \times \Veps \rightarrow \SLtwoC$ (for some $\epsilon > 0$) that is analytic in $z$ for $z \in \Veps$ and satisfies
\[
\left( \partial_{\vbar} + z \partial_u \right) T(x, z) =
\left( \partial_{\ubar} - z \partial_v \right) T(x, z) = 0
\]
for $(x, z) \in \mathbb{R}^4 \times \Veps$. Then, given any $\lambda \in \Veps$, the quantity
\begin{align}
\dot{J} :=& \chi_{\infty}(x, \lambda) T(x, \lambda) \chi_{\infty}(x, \lambda)^{-1} \cdot J + J \cdot \chi_0(x, \sigma(\lambda)) T(x, \lambda)^{\dagger} \chi_0(x, \sigma(\lambda))^{-1}
\nonumber
\\
=& \psi_{\infty}(x)^{-1}
\left[ \Psi_{\infty}(x, \lambda) T(x,  \lambda) \Psi_{\infty}(x,  \lambda)^{-1} + \right.
\nonumber
\\
&\hskip 3.5cm \left. \Psi_0(x, \sigma(\lambda)) T(x, \lambda)^{\dagger} \Psi_0(x, \sigma(\lambda))^{-1} \right]
\psi_0(x)
\label{Jsymmetry}
\end{align}
is a solution of the linearisation~\eqref{linearisation}.
\end{proposition}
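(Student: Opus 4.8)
The plan is to reduce the linearisation \eqref{linearisation} to a single covariant second-order identity for the matrix appearing in the second expression of \eqref{Jsymmetry}, and then to verify that identity using only the linear system \eqref{ALP}, the reality structure of the involution $\sigma$, and the self-duality of $\mathbf{A}$.

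First I would abbreviate $M_{\infty} := \Psi_{\infty}(x,\lambda)T(x,\lambda)\Psi_{\infty}(x,\lambda)^{-1}$ and $M_{0} := \Psi_0(x,\sigma(\lambda))T(x,\lambda)^{\dagger}\Psi_0(x,\sigma(\lambda))^{-1}$, and set $M := M_{\infty}+M_{0}$, so that the second form of \eqref{Jsymmetry} reads $\dot J = \psi_{\infty}^{-1}M\psi_0$ and, by \eqref{Jdefn}, $J^{-1}\dot J = \psi_0^{-1}M\psi_0$. Introducing the covariant derivatives $D_{\mu} := \partial_{\mu} + [A_{\mu},\,\cdot\,]$ for $\mu\in\{u,v,\ubar,\vbar\}$ with respect to the self-dual connection, the relations \eqref{A} (which state $(\partial_{\mu}\psi_0)\psi_0^{-1}=-A_{\mu}$ for $\mu\in\{\ubar,\vbar\}$ and $(\partial_{\mu}\psi_{\infty})\psi_{\infty}^{-1}=-A_{\mu}$ for $\mu\in\{u,v\}$) convert each ordinary derivative, after conjugation, into a covariant one. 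A direct computation then gives
\[
\partial_u\!\left(J\,\partial_{\ubar}(J^{-1}\dot J)\,J^{-1}\right) + \partial_v\!\left(J\,\partial_{\vbar}(J^{-1}\dot J)\,J^{-1}\right) = \psi_{\infty}^{-1}\left(D_uD_{\ubar}M + D_vD_{\vbar}M\right)\psi_{\infty},
\]
so the proposition is equivalent to the single identity $\left(D_uD_{\ubar}+D_vD_{\vbar}\right)M=0$.

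Next I would show that the two summands of $M$ are covariantly holomorphic. Writing $\mathcal{X}(\zeta):=D_{\ubar}-\zeta D_v$ and $\mathcal{Y}(\zeta):=D_{\vbar}+\zeta D_u$, the claim is $\mathcal{X}(\lambda)M_{\infty}=\mathcal{Y}(\lambda)M_{\infty}=0$. Indeed, applying the derivation $\mathbf{X}(\lambda)$ from \eqref{XY} to $M_{\infty}$ and using the first equation of \eqref{ALP} for $\Psi_{\infty}$ at $z=\lambda$ together with the hypothesis $\mathbf{X}(\lambda)T=0$ gives $\mathbf{X}(\lambda)M_{\infty} = -[A_{\ubar}-\lambda A_v,\,M_{\infty}]$, and since $\mathcal{X}(\lambda)-\mathbf{X}(\lambda)=[A_{\ubar}-\lambda A_v,\,\cdot\,]$ this yields $\mathcal{X}(\lambda)M_{\infty}=0$; the second equation of \eqref{ALP} gives $\mathcal{Y}(\lambda)M_{\infty}=0$ identically. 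For $M_{0}$ the same argument applies at spectral parameter $\sigma(\lambda)$, once one checks that $T(x,\lambda)^{\dagger}$ is annihilated by $\mathbf{X}(\sigma(\lambda))$ and $\mathbf{Y}(\sigma(\lambda))$. This is the step where the reality structure enters: using $\partial_{\ubar}(f^{\dagger})=(\partial_u f)^{\dagger}$ and $\partial_v(f^{\dagger})=(\partial_{\vbar}f)^{\dagger}$, the hypotheses on $T$, and the identity $\sigma(\lambda)\overline{\lambda}=-1$ coming from $\sigma(z)=-1/\overline{z}$, the relevant coefficients collapse to zero; hence $\mathcal{X}(\sigma(\lambda))M_{0}=\mathcal{Y}(\sigma(\lambda))M_{0}=0$.

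Finally I would combine these. For $M_{\infty}$ the relations $D_{\ubar}M_{\infty}=\lambda D_vM_{\infty}$ and $D_{\vbar}M_{\infty}=-\lambda D_uM_{\infty}$ give $(D_uD_{\ubar}+D_vD_{\vbar})M_{\infty}=\lambda[D_u,D_v]M_{\infty}=\lambda[F_{uv},M_{\infty}]$, which vanishes by the first self-duality equation $F_{uv}=0$ in \eqref{sdym}; the identical computation at parameter $\sigma(\lambda)$ gives $(D_uD_{\ubar}+D_vD_{\vbar})M_{0}=\sigma(\lambda)[F_{uv},M_{0}]=0$. Summing shows $(D_uD_{\ubar}+D_vD_{\vbar})M=0$, which is the required identity. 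I expect the main obstacle to be bookkeeping rather than conceptual: one must keep careful track of the conjugations by $\psi_0$ and $\psi_{\infty}$ in the first reduction and, above all, handle the $\sigma$-involution correctly so that the $T^{\dagger}$-term genuinely carries spectral parameter $\sigma(\lambda)$ and the two summands become covariantly holomorphic for their respective parameters. Once that is in place, the collapse of the cross terms to the curvature component $F_{uv}$, which vanishes identically for a self-dual connection, is exactly what forces the whole expression to vanish.
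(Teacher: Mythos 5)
Your argument is correct, and it is worth noting at the outset that the paper itself contains \emph{no} proof of this Proposition: it is quoted without proof from Section~III.A of~\cite{Cr}, following~\cite{CGSW,CGW,CW}, so your proof fills a genuine gap rather than shadowing an existing argument. The closest in-paper material is Appendix~\ref{App:symm}, which takes~\eqref{Jsymmetry} as its \emph{starting point} to derive the patching-matrix flow~\eqref{Gdot}; there the quantities $\alpha(x,\lambda)$ and $\alpha(x,\lambda)^{\dagger}$ of~\eqref{alpha} are exactly your $M_{\infty}$ and $M_{0}$, and the relations asserted there, $\left(D_{\ubar}-\lambda D_v\right)\alpha(x,\lambda)=\left(D_{\vbar}+\lambda D_u\right)\alpha(x,\lambda)=0$ and $\left(D_v+\overline{\lambda}D_{\ubar}\right)\alpha(x,\lambda)^{\dagger}=\left(D_u-\overline{\lambda}D_{\vbar}\right)\alpha(x,\lambda)^{\dagger}=0$, are (after multiplying the latter pair by $-1/\overline{\lambda}$ and using $\sigma(\lambda)=-1/\overline{\lambda}$) precisely your covariant-holomorphy statements at spectral parameters $\lambda$ and $\sigma(\lambda)$. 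Your proof supplies the derivation of these relations and then uses them for a different purpose, namely the direct verification of~\eqref{linearisation}. All three of your steps check out: (i) the conjugation computation using~\eqref{A} and~\eqref{Jdefn} correctly reduces~\eqref{linearisation} to $\left(D_uD_{\ubar}+D_vD_{\vbar}\right)M=0$, since $J^{-1}\dot{J}=\psi_0^{-1}M\psi_0$ and each $\partial$ becomes the corresponding $D$ under conjugation by $\psi_0$ or $\psi_{\infty}$; (ii) the reality step is right: taking $\dagger$ of $\left(\partial_{\vbar}+\lambda\partial_u\right)T=0$ gives $\partial_v T^{\dagger}+\overline{\lambda}\,\partial_{\ubar}T^{\dagger}=0$, i.e.\ $\left(\partial_{\ubar}-\sigma(\lambda)\partial_v\right)T^{\dagger}=0$, and similarly for the other equation, so $M_0$ is covariantly holomorphic at parameter $\sigma(\lambda)$ exactly as $M_{\infty}$ is at $\lambda$ (note also that $\lambda\in\Veps$ implies $\sigma(\lambda)\in\Veps$, so $\Psi_{\infty}(x,\lambda)$ and $\Psi_0(x,\sigma(\lambda))$ are both defined); (iii) the collapse $\left(D_uD_{\ubar}+D_vD_{\vbar}\right)M_{\infty}=\lambda\left[D_u,D_v\right]M_{\infty}=\lambda\left[F_{uv},M_{\infty}\right]=0$, and likewise for $M_0$ with coefficient $\sigma(\lambda)$, is correct; indeed $F_{uv}=0$ holds not only by~\eqref{sdym} but identically for the representation~\eqref{A}, since $A_u$ and $A_v$ are pure gauge with respect to $\psi_{\infty}$. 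The argument is clean, self-contained, and could serve as the missing proof of the Proposition.
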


In the case where the function $T$ is independent of $x$, it defines an element of the loop group $\Lambda \SLtwoC$ that admits a holomorphic extension to an open neighbourhood of $S^1$ in $\mathbb{C}^*$. The algebra of symmetries generated by such $T$ is then isomorphic to the Kac-Moody algebra of {\sltwoC}.

\vs
The action of such symmetries on the patching matrix is given by the following result:
\begin{theorem}
\label{thm:2.2}
Let $T\colon \mathbb{R}^4 \times S^1 \rightarrow \SLtwoC$ be as in the previous Proposition. The induced flow on the patching matrix of the corresponding bundle over $\cpthree \setminus L_{\infty}$ is given by
\begin{equation}
\dot{G}(x, z) = - T(x, z) G(x, z) - G(x, z) T^*(x, z) + \rho_{\infty}(x, z) G(x, z) + G(x, z) \rho_0(x, z),
\label{Gdot}
\end{equation}
for $(x, z) \in \mathbb{R}^4 \times \Veps$. In this equation, $\rho_0\colon \mathbb{R}^4 \times \Vzeroeps \rightarrow \sltwoC$ and $\rho_{\infty}\colon \mathbb{R}^4 \times \Vinfeps \rightarrow \sltwoC$ are analytic functions of $z$ on the respective regions and satisfy
\begin{align*}
\left( \partial_{\vbar} + z \partial_u \right) \rho_0(x, z) =
\left( \partial_{\ubar} - z \partial_v \right) \rho_0(x, z) = 0,
\\
\left( \partial_{\vbar} + z \partial_u \right) \rho_{\infty}(x, z) =
\left( \partial_{\ubar} - z \partial_v \right) \rho_{\infty}(x, z) = 0.
\end{align*}
Moreover, the functions $\rho_0$, $\rho_{\infty}$ may be absorbed into holomorphic changes of bases on the regions $\Vzeroeps$ and $\Vinfeps$. When this absortion process is carried out, the transformation~\eqref{Gdot} takes the simpler form
\begin{equation}
\dot{G}(x, z) = - T(x, z) G(x, z) - G(x, z) T^*(x, z).
\label{Gdot3}
\end{equation}
\end{theorem}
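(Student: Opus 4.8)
The plan is to reduce everything to the variation of the extended solutions $\Psi_0,\Psi_\infty$ and then read off the variation of the patching matrix from its definition~\eqref{G}. Writing $\dot{\Psi}_0:=\frac{d}{dt}\Psi_0$, etc., and differentiating $G=\Psi_\infty^{-1}\Psi_0$ gives, after re-inserting factors of $G$,
\[
\dot{G}=-\bigl(\Psi_\infty^{-1}\dot{\Psi}_\infty\bigr)G+G\bigl(\Psi_0^{-1}\dot{\Psi}_0\bigr)=:-\beta_\infty G+G\beta_0 ,
\]
so the entire problem becomes the computation of the right-logarithmic variations $\beta_\infty=\Psi_\infty^{-1}\dot{\Psi}_\infty$ on $\Vinfeps$ and $\beta_0=\Psi_0^{-1}\dot{\Psi}_0$ on $\Vzeroeps$. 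Comparing with the target~\eqref{Gdot}, I expect $\beta_\infty=T-\rho_\infty$ and $\beta_0=-T^{*}+\rho_0$, and the content of the theorem is to establish these two formulae and then to dispose of $\rho_0,\rho_\infty$.

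The mechanism that produces $T$ and $T^{*}$ with their arguments intact is the following. On the overlap $\Veps$ introduce the adjoint solutions $\Phi:=\Psi_\infty T\Psi_\infty^{-1}$ and $\Phi_{*}:=\Psi_0 T^{*}\Psi_0^{-1}$. The same reality computation that shows $\Psi_\infty=(\Psi_0^{*})^{-1}$ solves~\eqref{ALP} shows that $\Phi_{*}(x,z)=\Phi(x,\sigma(z))^{\dagger}$, so that the bracket in~\eqref{Jsymmetry} is exactly $\Phi+\Phi^{\dagger}$ and $\dot{J}$ is manifestly Hermitian; moreover, since $\mathbf{X}(z)T=\mathbf{Y}(z)T=0$, both $\Phi$ and $\Phi_{*}$ satisfy the adjoint linear system obtained by commuting the operators of~\eqref{ALP} through. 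The decisive --- but trivial --- identities are $\Psi_\infty^{-1}\Phi\Psi_\infty=T$ and $\Psi_0^{-1}\Phi_{*}\Psi_0=T^{*}$: they conjugate the adjoint solutions, which are what is naturally globally defined on $\Veps$, back into the bare multipliers $T,T^{*}$ appearing in~\eqref{Gdot}.

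To pin down $\beta_0,\beta_\infty$ I would proceed exactly as in the Appendix, deriving the variation from the connection: reconstruct $\dot{\mathbf A}$ from $\dot J$ via~\eqref{A}, and then solve the linearisation of~\eqref{ALP}. Differentiating~\eqref{ALP} in $t$ shows that $\beta_0$ (resp.\ $\beta_\infty$) obeys an inhomogeneous version of $\mathbf{X}(z),\mathbf{Y}(z)$ with source fixed by $\dot{\mathbf A}$; imposing analyticity of $\dot{\Psi}_0$ on $\Vzeroeps$ and of $\dot{\Psi}_\infty$ on $\Vinfeps$, together with the normalisation $\chi_0(x,0)=\chi_\infty(x,\infty)=\Id$, determines each up to an additive homogeneous solution, i.e.\ a map annihilated by $\mathbf{X}(z),\mathbf{Y}(z)$ and analytic on the relevant disc. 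The particular solution is the adjoint-solution piece, contributing $-T^{*}$ to $\beta_0$ and $T$ to $\beta_\infty$ by the two identities above, while the homogeneous ambiguity is precisely $\rho_0$ on $\Vzeroeps$ and $\rho_\infty$ on $\Vinfeps$, which by construction satisfy $(\partial_{\vbar}+z\partial_u)\rho=(\partial_{\ubar}-z\partial_v)\rho=0$. I expect this Riemann--Hilbert-type separation of the genuine annulus deformation $(T,T^{*})$ from the holomorphic remainder $(\rho_0,\rho_\infty)$ to be the main obstacle, since it is here that one must control the splitting across the overlap and fix the normalisation without ambiguity.

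It then remains to absorb $\rho_0,\rho_\infty$. A map that is analytic on $\Vzeroeps$ and annihilated by $\mathbf{X}(z),\mathbf{Y}(z)$ --- and, being analytic in $z$, by $\partial_{\overline z}$ --- is holomorphic for the complex structure that $\mathbb{C}^2\times\Vzeroeps$ inherits from $\mathcal{U}_1$, so $\rho_0$ is the infinitesimal generator of precisely the holomorphic change of basis $\Psi_0\mapsto\Psi_0 R$ under which $\Psi_0$ was already non-unique. The reality constraint forces the companion move $\Psi_\infty\mapsto\Psi_\infty(R^{*})^{-1}$, whence $G\mapsto R^{*}GR$ and the holomorphic bundle defined by $G$ is unchanged; infinitesimally this is $\dot{G}=\rho_\infty G+G\rho_0$ with $\rho_\infty=\rho_0^{*}$. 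Choosing the representatives $\Psi_0,\Psi_\infty$ so that these changes of trivialisation are switched off leaves the self-dual connection untouched and reduces~\eqref{Gdot} to~\eqref{Gdot3}. This final step is routine once the explicit forms of $\rho_0,\rho_\infty$ are in hand; the substance lies in the determination of $\beta_0,\beta_\infty$.
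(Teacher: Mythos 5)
Your overall architecture coincides with the paper's (Appendix~\ref{App:symm}): differentiate $G=\Psi_\infty^{-1}\Psi_0$ to get $\dot G=-\beta_\infty G+G\beta_0$, determine the frame variations by solving the linearised Lax system subject to analyticity, identify the leftover ambiguity with $\rho_0,\rho_\infty$, and absorb it as in Theorem~\ref{thm:Crane}. The gap is that your announced answer for the middle step, $\beta_\infty=T-\rho_\infty$ and $\beta_0=-T^*+\rho_0$, is not just unproven but false. By your own requirement, $\dot\Psi_\infty$ is analytic on $\Vinfeps$, hence so is $\beta_\infty=\Psi_\infty^{-1}\dot\Psi_\infty$; your formula would then force $T=\beta_\infty+\rho_\infty$ to extend analytically from the annulus $\Veps$ to all of $\Vinfeps$, which a general $T$ (e.g.\ one with a pole at $z=\infty$) does not. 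Worse, if $T$ \emph{did} so extend, then $T^*$ would extend to $\Vzeroeps$ and the flow $\dot G=-TG-GT^*$ would be an infinitesimal holomorphic change of trivialisation on each patch, fixing the connection: your ansatz is self-consistent only when the symmetry acts trivially. Relatedly, your ``particular solution'' $\Phi(x,z)=\Psi_\infty T\Psi_\infty^{-1}$ is annihilated by $D_{\ubar}-zD_v$ and $D_{\vbar}+zD_u$ (it is covariantly constant at the same spectral parameter), so it contributes nothing toward matching the source $\dot{\mathbf A}$; it is a \emph{homogeneous} piece, and an inadmissible one for exactly the analyticity reason above.

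What the proof actually requires is the structure you deferred as ``the main obstacle''. The source $\dot{\mathbf A}$ is built from $\alpha(x,\lambda)=\Psi_\infty(x,\lambda)T(x,\lambda)\Psi_\infty(x,\lambda)^{-1}$ at the \emph{fixed} parameter $\lambda$, so the genuine particular solutions carry rational factors with poles at $z=\lambda,\sigma(\lambda)$, e.g.
\begin{equation*}
\dot\Psi_\infty\Psi_\infty^{-1}-\dot\psi_\infty\psi_\infty^{-1}
=\frac{\lambda}{\lambda-z}\bigl(\alpha(x,\lambda)-\alpha(x,z)\bigr)
+\frac{1}{1+z\overline{\lambda}}\bigl(\alpha(x,\lambda)^{\dagger}-\alpha(x,\sigma(z))^{\dagger}\bigr)
-\Psi_\infty h_\infty\Psi_\infty^{-1},
\end{equation*}
as in~\eqref{psiinfdot}. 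One must then (i) verify that all $\lambda$-dependence cancels in $\dot G$ using $\dot\psi_0\psi_0^{-1}-\dot\psi_\infty\psi_\infty^{-1}=\alpha(x,\lambda)+\alpha(x,\lambda)^{\dagger}$, which yields $\dot G=TG+GT^*+h_\infty G+Gh_0$ (Lemma~\ref{lem:Gdot}), and then (ii) pin down the annulus functions $h_0,h_\infty=h_0^*$ by an honest Cauchy-kernel splitting across $\Veps$ (Proposition preceding~\eqref{h0}); the singular parts so obtained involve the \emph{$G$-conjugated} multiplier $G^{-1}TG$ together with $T^*$, and it is these terms --- not a ``$T$'' sitting inside $\beta_\infty$ --- that convert $+TG+GT^*$ into $-TG-GT^*$ modulo $\rho_\infty G+G\rho_0$. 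None of this cancellation or splitting is present in your proposal, so the core of the argument is missing.
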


\begin{remark}
These transformations have been investigated from the viewpoint of twistor-theory and have a natural sheaf-theoretic interpretation~\cite{Iv2,Iv1,Pa,Po}. In the literature, it is standard to assume~\eqref{Gdot} (and the group-theoretic version~\eqref{crane} below) as the transformation law of the patching matrix, and to work backwards to derive the transformation law of $J$ and the connection (see, e.g.,~\cite{Iv2,Iv1,Po}). Since a direct proof of~\eqref{Gdot}, starting from~\eqref{Jsymmetry}, does not appear in the literature, we have included a proof in Appendix~\ref{App:symm}.
\end{remark}

\begin{remark}
The transformation~\eqref{Gdot3} is \emph{independent of the parameter $\lambda$} that appears in equation~\eqref{Jsymmetry}. As such, the transformation depends only on the function $T$. In equation~\eqref{Gdot}, the functions $\rho_0$, $\rho_{\infty}$ will generally depend on the parameter $\lambda$, but the corresponding dependence of $\dot{G}$ on $\lambda$ may be removed by a holomorphic change of frame. This situation is different from that in, for example, the theory of harmonic maps from a domain $X \subseteq \mathbb{R}^2$ to a Lie group $G$. In this case, one has a similar family of non-local symmetries~\cite{D} depending on a function $T(\lambda)$. There, however, the transformation properties of the extended harmonic map depends explicitly on the value of the parameter $\lambda$ (see, e.g.,~\cite[\S3]{UhlenbeckJDG}). Power-series expanding in $\lambda$ then gives a family of flows acting on the extended solution, and hence on the space of harmonic maps.
\end{remark}

The exponentiated form of the transformation law~\eqref{Gdot3} is given by the following:
\begin{theorem}\cite[Chapter~IV.C]{Cr}
\label{thm:Crane}
Let $g\colon \mathbb{R}^4 \times S^1 \rightarrow \SLtwoC$ be a smooth map that admits a continuous extension to a holomorphic map $g\colon \mathbb{C}^2 \times \Veps \subset \cpthree \setminus L_{\infty} \rightarrow \SLtwoC$, for some $\epsilon > 0$. Then we define the action of $g$ on the patching matrix $G(x, z)$ by the law
\begin{equation}
G(x, z) \mapsto g(x, z) \cdot G(x, z) \cdot g^*(x, z).
\label{crane}
\end{equation}
If $g$ extends holomorphically to $\mathbb{R}^4 \times \Vzeroeps$, then the corresponding transformation is a holomorphic change of basis on the bundle over $\cpthree \setminus L_{\infty}$, which leaves the self-dual connection, $\mathbf{A}$, unchanged.
\end{theorem}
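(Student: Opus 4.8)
The first sentence of the statement merely \emph{defines} the action~\eqref{crane}; the content to be proved is the second sentence, namely that when $g$ extends holomorphically over the whole disc $\mathbb{R}^4 \times \Vzeroeps$ the map~\eqref{crane} is a holomorphic change of basis and fixes the self-dual connection $\mathbf{A}$. The plan is to recognise such a $g$ as one of the allowed ambiguities in the splitting that produces the connection, and then to read off the effect on $\mathbf{A}$ from~\eqref{A}.

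First I would reinterpret the hypothesis. A function on $\mathcal{U}_1$ is holomorphic exactly when it is killed by the anti-holomorphic vector fields $\mathbf{X}(z)$, $\mathbf{Y}(z)$, $\partial_{\overline z}$ of~\eqref{XY}, equivalently when it depends on $(x,z)$ only through the twistor coordinates $(u - z\vbar, v + z\ubar, z)$. Thus holomorphy of $g$ on $\mathbb{R}^4 \times \Vzeroeps$ says precisely that $(\partial_{\ubar} - z\partial_v) g = (\partial_{\vbar} + z\partial_u) g = \partial_{\overline z} g = 0$ there. But this is exactly the form of the gauge freedom in the construction of $\Psi_0$: the solution of~\eqref{ALP} analytic on $\Vzeroeps$ is unique only up to $\Psi_0 \mapsto \Psi_0 R$ with $R$ holomorphic on $\Vzeroeps$, so $g$ is an admissible $R$.

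Then I would realise~\eqref{crane} as a reality-respecting re-choice of the splitting. Replacing $\Psi_0 \mapsto \Psi_0\, g$ again solves~\eqref{ALP} and stays analytic on $\Vzeroeps$; to keep the relation $\Psi_\infty = (\Psi_0^*)^{-1}$ one is then forced to take $\Psi_\infty \mapsto \Psi_\infty (g^*)^{-1}$, which is analytic on $\Vinfeps$ because $g^*$ is holomorphic there. Since the two factors are supported on the separate patches $\Vzeroeps$ and $\Vinfeps$, this pair of replacements is exactly a holomorphic change of frame of the bundle over $\cpthree \setminus L_\infty$; substituting into $G = \Psi_\infty^{-1}\Psi_0$ returns a transformation of the form~\eqref{crane}, with the precise placement of $g$ and $g^*$ fixed by the reality condition. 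Keeping this $*$-bookkeeping straight is the first place care is needed.

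Finally I would verify that $\mathbf{A}$ is unchanged by restricting to $z = 0$ and $z = \infty$. The fields feeding~\eqref{A} transform as $\psi_0 \mapsto \psi_0\, g(x,0)$ and $\psi_\infty \mapsto \psi_\infty\, (g^*(x,\infty))^{-1}$. At $z = 0$ the twistor coordinates become $(u, v, 0)$, so $g(x,0)$ is holomorphic in $(u,v)$, i.e.\ $\partial_{\ubar} g(x,0) = \partial_{\vbar} g(x,0) = 0$; and since $\sigma(\infty) = 0$ we have $g^*(x,\infty) = g(x,0)^{\dagger}$, which is correspondingly anti-holomorphic in $(u,v)$. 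Inserting these into~\eqref{A}, the factor $g(x,0)$ is invisible to $\partial_{\ubar}, \partial_{\vbar}$ and cancels against its inverse in $A_{\ubar}$ and $A_{\vbar}$, while $g(x,0)^{\dagger}$ cancels in the same way in $A_u$ and $A_v$, so every component of $\mathbf{A}$ is unchanged. The hard part is not any single computation but confirming that these boundary evaluations carry exactly the holomorphy needed for the cancellations --- and this is where the hypothesis that $g$ extends over the \emph{whole} disc $\Vzeroeps$, rather than only over the annulus $\Veps$ on which a generic symmetry~\eqref{Gdot3} lives, enters decisively. I would also check that $(\Psi_0 g, \Psi_\infty (g^*)^{-1})$ is genuinely the reality-respecting splitting from which the connection is reconstructed, so that the comparison with the original $\mathbf{A}$ is legitimate rather than an artefact of gauge.
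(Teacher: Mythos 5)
Your strategy is the right one, and it is essentially the only route available from inside this paper: Theorem~\ref{thm:Crane} is not proved here (it is cited from~\cite{Cr}), and the ingredients you invoke --- the ambiguity $\Psi_0 \mapsto \Psi_0 R$ with $R$ holomorphic on $\mathbb{C}^2 \times \Vzeroeps$ from Section~\ref{sec:patching}, the reality relation $\Psi_\infty = (\Psi_0^*)^{-1}$, and the reconstruction formulas~\eqref{A} --- are exactly the right ones. Your endpoint verification is also correct: $g(x,0)$ is holomorphic in $(u,v)$, $g^*(x,\infty) = g(x,0)^{\dagger}$ is anti-holomorphic, and both factors are annihilated by the relevant derivatives in~\eqref{A}, so $\mathbf{A}$ is unchanged.

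The genuine problem is the point you flag as ``$*$-bookkeeping'' and then wave through: it does not come out as you claim. Substituting your replacements into~\eqref{G} gives
\[
G = \Psi_{\infty}^{-1}\Psi_0 \;\longmapsto\; \left( \Psi_{\infty}\, (g^*)^{-1} \right)^{-1} \left( \Psi_0\, g \right) = g^*\, G\, g ,
\]
which is \emph{not} the transformation~\eqref{crane} for the given $g$; it is~\eqref{crane} applied to $g^*$, a map holomorphic on $\Vinfeps$, in place of $g$. Nor can this be repaired while keeping $g$ holomorphic on $\Vzeroeps$: with the paper's conventions ($G = \Psi_{\infty}^{-1}\Psi_0$, so right multiplication of $G$ corresponds to frame changes on $\mathcal{S}_0$ and left multiplication to frame changes on $\mathcal{S}_{\infty}$ --- this is also how $\rho_0$ and $\rho_{\infty}$ enter~\eqref{Gdot} in Theorem~\ref{thm:2.2}), any reality-compatible change of frame places the $\Vzeroeps$-holomorphic factor on the \emph{right} of $G$. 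What your argument actually proves is therefore: maps $g$ that extend holomorphically to $\mathbb{R}^4 \times \Vinfeps$ act through~\eqref{crane} as holomorphic changes of basis fixing $\mathbf{A}$; equivalently, $\Vzeroeps$-holomorphic maps act trivially via $G \mapsto g^* G g$. This version is the one consistent with the rest of the paper --- in Remarks~\ref{rem:absorbtion} and~\ref{counterexample} it is precisely the part of $d$ holomorphic on $\Vinfeps$ that is absorbed into $\rho_{\infty}$, while the part holomorphic on $\Vzeroeps$ but singular at $z = \infty$ (the term proportional to $z$) acts nontrivially --- so the mismatch looks like a convention slip in the statement of Theorem~\ref{thm:Crane} itself rather than a flaw in your strategy. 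But as a proof of the statement as written, your step ``substituting into $G = \Psi_\infty^{-1}\Psi_0$ returns a transformation of the form~\eqref{crane}'' is false as it stands; you should either prove the theorem with $\Vzeroeps$ replaced by $\Vinfeps$, or state explicitly that for $\Vzeroeps$-holomorphic $g$ the induced change of basis is $G \mapsto g^* G g$.
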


\begin{remark}
The infinitesimal form of~\eqref{crane}, where $g(x, z) = \exp \left( - t T(x, z) \right)$ is equation~\eqref{Gdot3}.
\end{remark}

\subsection{The ADHM construction}
We wish to study the action of the symmetries mentioned above on the moduli spaces of instanton solutions of the {\sdyme} on $\mathbb{R}^4$ or, equivalently, $S^4$. As such, we are concerned with connections whose curvatures are $L^2$, in which case we have
\[
\int_{\mathbb{R}^4} | \mathbf{F} |^2 \, d^4 x = - 8 \pi^2 k,
\]
where $k \in \mathbb{N}_0$ is the second Chern number, $c_2(E)$, of the bundle $E$ (also called the instanton number of the connection). A self-dual connection with $L^2$ curvature on $\mathbb{R}^4$ necessarily extends to a self-dual connection on $S^4$~\cite{U}. We will refer to such connections, defined on either $\mathbb{R}^4$ or $S^4$ as an \emph{instanton}. The moduli space of instanton solutions, with instanton number $k$, modulo gauge transformations is a manifold of dimension $(8 k - 3)$ (away from singularities due to reducible connections), which we denote by $\mathcal{M}_k$. For later considerations, it will be important to think of $\mathcal{M}_k$ as being finite-dimensional submanifolds of the (infinite-dimensional) space of all self-dual connections on $\mathbb{R}^4$, not necessarily having $L^2$ curvature, which we denote by $\mathcal{M}$. The symmetries of the {\sdyme} that we consider may be viewed as defining flows on the space $\mathcal{M}$, and our main question is when these flows preserve the sub-manifolds $\mathcal{M}_k$.

\vs
Via the Ward correspondence~\cite{AHS,Wa}, self-dual connections of instanton number $k$ correspond to holomorphic bundles over {\cpthree} that are trivial on real lines. All such bundles may be constructed directly in terms of quaternionic linear algebra by the ADHM construction~\cite{ADHM}, which we now briefly recall.

For each $\mathbf{z} = (z_1, z_2, z_3, z_4) \in \mathbb{C}^4$, we define a linear map
\[
A(\mathbf{z})\colon W \rightarrow V,
\]
between complex vector spaces $W, V$ of dimension $k, 2k+2$ respectively, which is of the form
\[
A(\mathbf{z}) = \sum_{i = 1}^4 z_i A_i.
\]
The space $W$ is assumed to admit an anti-linear involution $\sigma_W\colon W \rightarrow W$. The space $V$ is assumed to have a symplectic form $\left( \cdot, \cdot \right)$ and an anti-linear anti-involution $\sigma_V\colon V \rightarrow V$ that are compatible in the sense that
\[
\left( \sigma_V u, \sigma_V v \right) = {\overline{\left( u, v \right)}}, \qquad
\forall \, u, v \in V.
\]
We require that the map $A(\mathbf{z})$ satisfies the compatibility condition
\begin{equation}
\sigma_V \!\left( A(\mathbf{z}) w \right) = A(\sigma(\mathbf{z}))\, \sigma_W(w), \qquad \forall \, w \in W, \qquad \forall \mathbf{z} \in \mathbb{C}^4,
\label{Areal}
\end{equation}
where $\sigma\colon \mathbb{C}^4 \rightarrow \mathbb{C}^4$ is as in equation~\eqref{C4involution}. Finally, we impose the following conditions:
\begin{itemize}
\item For all $\mathbf{z} \in \mathbb{C}^4$, the space $U_{\mathbf{z}} := A(\mathbf{z})(W) \subset V$ is of dimension $k$;
\item For all $\mathbf{z} \in \mathbb{C}^4$, $U_{\mathbf{z}}$ is isotropic with respect to $\left( \cdot, \cdot \right)$ i.e. $U_{\mathbf{z}} \subseteq U_{\mathbf{z}}^{\perp}$, where ${}^{\perp}$ denotes the complement with respect to the form $\left( \cdot, \cdot \right)$.
\end{itemize}

If we then define the quotient $E_{\mathbf{z}} := U_{\mathbf{z}}^{\perp} / U_{\mathbf{z}}$, then the collection of $E_{\mathbf{z}}$ defines a holomorphic, rank-$2$ complex vector bundle $E \rightarrow \cpthree$ with structure group $\SLtwoC$. The reality condition~\eqref{Areal} then imply that the bundle is trivial on restriction to any real line and that the self-dual connection on $S^4$ determined by the Ward correspondence is an {\SUtwo} connection.

\section{Patching matrix description of ADHM construction}
\label{sec:patchingADHM}

In order to make contact between the action of the non-local symmetries of the {\sdyme} in the form of~\eqref{Gdot} and the ADHM construction, we first need to reformulate the ADHM construction in terms of patching matrices.

We assume given an instanton solution of the {\sdyme} on $S^4$, with corresponding holomorphic bundle $E \rightarrow \cpthree$. We then consider (without any loss of information~\cite{U}) the restriction of this solution to $\mathbb{R}^4 \subset S^4$ and the restriction of the bundle $E$ to $\pi^{-1}(\mathbb{R}^4) \equiv \cpthree \setminus L_{\infty}$, which, for convenience, we denote by $E \rightarrow \cpthree \setminus L_{\infty}$. We split the set $\cpthree \setminus L_{\infty}$ as the union of two regions
\begin{align*}
\mathcal{S}_0 &:= \bracket{\left( (u, v), z \right) \in \mathbb{C}^2 \times \cpone}{|z| < 1 + \epsilon}
= \mathbb{C}^2 \times \Vzeroeps,
\\
\mathcal{S}_{\infty} &:= \bracket{\left( (u, v), z \right) \in \mathbb{C}^2 \times \cpone}{|z| > \frac{1}{1 + \epsilon}}
= \mathbb{C}^2 \times \Vinfeps.
\end{align*}
Since $\mathcal{S}_0, \mathcal{S}_{\infty} \cong \mathbb{C}^3$, the bundle $E$ restricted to either of these regions is holomorphically trivial~\cite{Cr,OSS}. The bundle $E$ is therefore characterised by the transition function $G\colon \mathcal{S}_0 \cap \mathcal{S}_{\infty} \rightarrow \SLtwoC$, which is the patching matrix from Section~\ref{sec:patching}.

The map $G$ may be constructed directly from the ADHM data, at the expense of fixing bases on the spaces $V$ and $W$. In particular, let $\{ \mathbf{a}_i \}_{i=1}^k$ be a basis of vectors in $W$ that are \emph{real} with respect to $\sigma_W$, in the sense that
\[
\sigma_W (\mathbf{a}_i) = \mathbf{a}_i, \qquad i = 1, \dots, k.
\]
(So, in practice, we are looking on $W$ as being the complexification of the fixed point set of $\sigma_W$.) The vectors
\[
\mathbf{v}_i(\mathbf{z}) := A(\mathbf{z}) \mathbf{a}_i \in V, \qquad i = 1, \dots, k
\]
define a collection of $k$ vectors that span the space $U_{\mathbf{z}} \subset V$. Due to the reality of the vectors $\mathbf{a}_i$, these vectors obey the reality condition
\begin{equation}
\sigma_V \!\left( \mathbf{v}_i (\mathbf{z}) \right) = \mathbf{v}_i (\sigma(\mathbf{z})), \qquad i = 1, \dots, k, \qquad \forall \mathbf{z} \in \mathbb{C}^4.
\label{vreality}
\end{equation}
Since $U_{\mathbf{z}}$ is isotropic with respect to the symplectic form, we deduce that
\begin{equation}
\left( \mathbf{v}_i (\mathbf{z}), \mathbf{v}_j (\mathbf{z}) \right) = 0, \qquad i, j = 1, \dots, k, \qquad \mathbf{z} \in \mathbb{C}^4.
\label{prodvv}
\end{equation}

We now view $\mathbf{z}$ as homogeneous coordinates on {\cpthree}, and construct bases for $\left[ \mathbf{z} \right] \in \mathcal{S}_0 \subset \cpthree \setminus L_{\infty}$. Given $\left[ \mathbf{z} \right] \in \mathcal{S}_0$, the annihilator $U_{\mathbf{z}}^{\perp}$ is spanned by $\{ \mathbf{v}_i (\mathbf{z}) \}$ along with two vectors $\bracket{\mathbf{e}_A (\mathbf{z})}{A = 1, 2}$ that span $U_{\mathbf{z}}^{\perp} / U_{\mathbf{z}}$. We therefore have
\begin{equation}
\left( \mathbf{v}_i (\mathbf{z}), \mathbf{e}_A (\mathbf{z}) \right) = 0, \qquad i = 1, \dots, k, \qquad A = 1, 2, \qquad \left[ \mathbf{z} \right] \in \mathcal{S}_0.
\label{prodve}
\end{equation}
and, without loss of generality, may assume that
\begin{equation}
\left( \mathbf{e}_1 (\mathbf{z}), \mathbf{e}_2 (\mathbf{z}) \right) = - \left( \mathbf{e}_2 (\mathbf{z}), \mathbf{e}_1 (\mathbf{z}) \right) = 1.
\label{prodee}
\end{equation}
Although not strictly necessary, it will sometimes be useful to extend the vectors $\{ \mathbf{e}_A (\mathbf{z}), \mathbf{v}_i (\mathbf{z}) \}$ to a full basis for $V$ by adding a set of vectors $\{\mathbf{w}^i (\mathbf{z})|i = 1, \dots, k\}$ with the property that
\begin{equation}
\left( \mathbf{w}^i (\mathbf{z}), \mathbf{w}^j (\mathbf{z}) \right) = 0, \qquad
\left( \mathbf{v}_i (\mathbf{z}), \mathbf{w}^j (\mathbf{z}) \right) = \delta_i^j, \qquad
\left( \mathbf{w}^i (\mathbf{z}), \mathbf{e}_A (\mathbf{z}) \right) = 0.
\label{prodw}
\end{equation}

We may also define a basis $\bracket{\mathbf{f}_A (\mathbf{z})}{A = 1, 2}$ for $U_{\mathbf{z}}^{\perp} / U_{\mathbf{z}}$ for $\left[ \mathbf{z} \right] \in \mathcal{S}_{\infty}$ by the relations
\[
\mathbf{f}_1 (\mathbf{z}) := - \sigma_V \!\left( \mathbf{e}_2 \!\left( \sigma(\mathbf{z}) \right) \right), \qquad
\mathbf{f}_2 (\mathbf{z}) := \sigma_V \!\left( \mathbf{e}_1 \!\left( \sigma(\mathbf{z}) \right) \right).
\]
This basis automatically has the property that
\begin{equation}
\left( \mathbf{f}_1(\mathbf{z}), \mathbf{f}_2(\mathbf{z}) \right) = 1, \qquad
\left( \mathbf{v}_i (\mathbf{z}), \mathbf{f}_A (\mathbf{z}) \right) = 0.
\label{prodff}
\end{equation}

Given that $\{ \mathbf{e}_A (\mathbf{z}) \}$ and $\{ \mathbf{f}_A (\mathbf{z}) \}$ are both bases for $U_{\mathbf{z}}^{\perp} / U_{\mathbf{z}}$ for $\left[ \mathbf{z} \right] \in \mathcal{S}_0 \cap \mathcal{S}_{\infty}$, there exist functions $G_A{}^B(\mathbf{z})$, $\lambda_A{}^i(\mathbf{z})$ defined on this region with the property that
\begin{equation}
\mathbf{f}_A(\mathbf{z}) = G_A{}^B(\mathbf{z}) \,\mathbf{e}_B(\mathbf{z}) + \lambda_A{}^i(\mathbf{z}) \,\mathbf{v}_i(\mathbf{z}).
\label{patchingmatrix}
\end{equation}
(From now on, summation convention will be assumed over repeated indices.) The matrix $G(\mathbf{z})$, defined for $\left[ \mathbf{z} \right] \in \mathcal{S}_0 \cap \mathcal{S}_{\infty}$ is then the transition function of our bundle $E$.

\

Before deriving some properties of the patching matrix that we will require, we define the {\SLtwoC}-invariant tensor $\epsilon$ by $\epsilon^{AB} = - \epsilon^{BA}$ with $\epsilon^{12} = 1$ and the $\mathrm{SO}_2(\mathbb{C})$-invariant tensor $\delta$ with components
\[
\delta_{AB} = \begin{cases} 1 &A = B, \\ 0 &A \neq B. \end{cases}
\]

\begin{proposition}
The patching matrix, $G$, as defined above obeys the conditions
\begin{gather*}
\det G(\mathbf{z}) = 1, \qquad
G^*(\mathbf{z}) = G(\mathbf{z}),
\end{gather*}
for $\left[ \mathbf{z} \right] \in \mathcal{S}_0 \cap \mathcal{S}_{\infty}$, where $G^*(\mathbf{z}) := G(\sigma(\mathbf{z}))^{\dagger}$. The functions $\lambda_A{}^i$ obey the reality condition
\[
{\overline{\lambda_A{}^i(\sigma(\mathbf{z}))}} = \delta_{AB} G_C{}^B \epsilon^{CD} \lambda_D{}^i(\mathbf{z}), \qquad
\lambda_A{}^i(\mathbf{z}) = - G_A{}^B \delta_{BC} \epsilon^{CD} {\overline{\lambda_D{}^i(\sigma(\mathbf{z}))}}
\]
for $\left[ \mathbf{z} \right] \in \mathcal{S}_0 \cap \mathcal{S}_{\infty}$.
\end{proposition}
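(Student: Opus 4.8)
The plan is to prove the three assertions in turn, working on the overlap $\mathcal{S}_0 \cap \mathcal{S}_{\infty}$ and exploiting the fact that the symplectic form $\left( \cdot, \cdot \right)$ descends to the two-dimensional quotient $U_{\mathbf{z}}^{\perp}/U_{\mathbf{z}}$, which is well defined precisely because $U_{\mathbf{z}}$ is isotropic. For $\det G = 1$ I would simply pair the two sides of \eqref{patchingmatrix}. Substituting $\mathbf{f}_A = G_A{}^B \mathbf{e}_B + \lambda_A{}^i \mathbf{v}_i$ into $\left( \mathbf{f}_1(\mathbf{z}), \mathbf{f}_2(\mathbf{z}) \right)$, every cross-term containing a $\mathbf{v}_i$ drops out by \eqref{prodvv} and \eqref{prodve} together with the antisymmetry of the form, leaving $\left( \mathbf{f}_1, \mathbf{f}_2 \right) = G_1{}^B G_2{}^C \left( \mathbf{e}_B, \mathbf{e}_C \right)$. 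By the normalisation \eqref{prodee} the right-hand side is exactly $\det G \cdot \left( \mathbf{e}_1, \mathbf{e}_2 \right) = \det G$, whereas the left-hand side equals $1$ by \eqref{prodff}. Hence $\det G = 1$; equivalently $G$ preserves the induced symplectic form on the quotient, so $G \in \SLtwoC$.

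For the Hermiticity $G^* = G$ and the reality of $\lambda$ I would bring in the reality structures. The idea is to write \eqref{patchingmatrix} at the antipodal point $\sigma(\mathbf{z})$ and apply $\sigma_V$ to it. Anti-linearity of $\sigma_V$ turns the coefficients into $\overline{G_A{}^B(\sigma(\mathbf{z}))}$ and $\overline{\lambda_A{}^i(\sigma(\mathbf{z}))}$; the reality of the $\mathbf{v}_i$, equation \eqref{vreality}, rewrites $\sigma_V\!\left( \mathbf{v}_i(\sigma(\mathbf{z})) \right)$ as a multiple of $\mathbf{v}_i(\mathbf{z})$ (using $\sigma^2 = - \Id$ on $\mathbb{C}^4$ from \eqref{C4involution} and the linearity of $\mathbf{z} \mapsto A(\mathbf{z})$); and the defining relations $\mathbf{f}_1 = -\sigma_V\!\left( \mathbf{e}_2 \circ \sigma \right)$, $\mathbf{f}_2 = \sigma_V\!\left( \mathbf{e}_1 \circ \sigma \right)$ convert $\sigma_V\!\left( \mathbf{e}_B(\sigma(\mathbf{z})) \right)$ and $\sigma_V\!\left( \mathbf{f}_A(\sigma(\mathbf{z})) \right)$ back into the $\mathbf{f}_A(\mathbf{z})$ and $\mathbf{e}_A(\mathbf{z})$. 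The result is a second expansion of the $\mathbf{f}_A(\mathbf{z})$ in the basis $\{ \mathbf{e}_B(\mathbf{z}), \mathbf{v}_i(\mathbf{z}) \}$ with coefficients built from $\overline{G(\sigma(\mathbf{z}))}$ and $\overline{\lambda(\sigma(\mathbf{z}))}$. Matching it against \eqref{patchingmatrix} and using linear independence of $\{ \mathbf{e}_B, \mathbf{v}_i \}$ gives, from the $\mathbf{e}$-components, the relation $G_A{}^B(\mathbf{z}) = \overline{G_B{}^A(\sigma(\mathbf{z}))}$, i.e. $G^*(\mathbf{z}) = G(\mathbf{z})$, and from the $\mathbf{v}$-components the first displayed reality condition on $\lambda$.

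A cleaner route for reading off the individual components is to extract $G$ and $\lambda$ directly as symplectic pairings: \eqref{prodee} expresses $G_A{}^B$ through $\left( \mathbf{f}_A, \mathbf{e}_C \right)$, and the dual vectors $\mathbf{w}^i$ of \eqref{prodw} give $\lambda_A{}^i = \left( \mathbf{f}_A, \mathbf{w}^i \right)$; one then transports each pairing through $\sigma_V$ using the compatibility $\left( \sigma_V u, \sigma_V v \right) = \overline{\left( u, v \right)}$. Either way, the second reality identity for $\lambda$ need not be derived independently: it follows from the first by complex conjugation, exchanging $\mathbf{z} \leftrightarrow \sigma(\mathbf{z})$, and invoking $G^* = G$ together with $\det G = 1$, since for a $2 \times 2$ matrix the identity $G\, \epsilon\, G^{T} \epsilon = -\Id$ is exactly equivalent to $\det G = 1$.

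The main obstacle I anticipate is purely bookkeeping rather than conceptual: keeping careful track of the homogeneity of $\mathbf{e}_A(\mathbf{z})$ and $\mathbf{v}_i(\mathbf{z})$ under $\mathbf{z} \mapsto \sigma(\mathbf{z})$, where $\sigma$ squares to $-\Id$ on the homogeneous coordinates but is the identity on $\cpthree$, and pinning down the $\epsilon$/$\delta$ index placements so that the signs in the two reality relations emerge exactly as stated. Because establishing $\det G = 1$ is what makes the two $\lambda$-identities manifestly equivalent, I would prove it first and then address the reality conditions.
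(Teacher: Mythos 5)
Your proposal is correct and takes essentially the same route as the paper: the $\det G = 1$ computation via $\left( \mathbf{f}_1, \mathbf{f}_2 \right) = 1$ is identical, and your reality argument—applying $\sigma_V$ to \eqref{patchingmatrix} at $\sigma(\mathbf{z})$, using \eqref{vreality}, $\sigma^2 = -\Id$ and the defining relations for the $\mathbf{f}_A$, then comparing coefficients in the basis $\{ \mathbf{e}_B, \mathbf{v}_i \}$—is the same computation the paper performs by applying $\sigma_V$ to \eqref{fdef} and substituting \eqref{patchingmatrix}. The only (immaterial) difference is which of the two equivalent $\lambda$-identities you read off directly from the comparison, the other following by the $\det G = 1$ algebra you describe; your anticipated bookkeeping issues (homogeneity under $\mathbf{z} \mapsto \sigma(\mathbf{z})$, $\epsilon$/$\delta$ index placement) are exactly the manipulations the paper's proof carries out.
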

\begin{proof}
Firstly, we have
\begin{align*}
1 &= \left( \mathbf{f}_1(\mathbf{z}), \mathbf{f}_2(\mathbf{z}) \right)
= \left( G_1{}^B(\mathbf{z}) \,\mathbf{e}_B(\mathbf{z}) + \lambda_1{}^i(\mathbf{z}) \,\mathbf{v}_i(\mathbf{z}), G_2{}^B(\mathbf{z}) \,\mathbf{e}_B(\mathbf{z}) + \lambda_2{}^i(\mathbf{z}) \,\mathbf{v}_i(\mathbf{z}) \right)
\\
&= \left( G_1{}^1(\mathbf{z}) G_2{}^2(\mathbf{z}) - G_1{}^2(\mathbf{z}) G_2{}^1(\mathbf{z}) \right) \left( \mathbf{e}_1(\mathbf{z}), \mathbf{e}_2(\mathbf{z}) \right)
= \det G(\mathbf{z}),
\end{align*}
where the four equalities follow from equations~\eqref{prodff}, \eqref{patchingmatrix}, \eqref{prodve} and~\eqref{prodee}, respectively. Therefore, $\det G(\mathbf{z}) = 1$, as required.

The definition of the vectors $\mathbf{f}_A(\mathbf{z})$ may be rewritten in the form
\begin{equation}
\mathbf{f}_A(\mathbf{z}) = - \delta_{AB} \epsilon^{BC} \sigma_V \!\left( \mathbf{e}_C \!\left( \sigma(\mathbf{z}) \right) \right).
\label{fdef}
\end{equation}
We now apply $\sigma_V$ to this equation, substitute equations~\eqref{patchingmatrix} and~\eqref{vreality}, and use the anti-linear, anti-involutive nature of $\sigma_V$. After some manipulation of $\delta$ and $\epsilon$ tensors, and using the fact that $\det G = 1$, we then find that
\[
\mathbf{f}_A(\mathbf{z}) = \left( G^*(\mathbf{z}) \right)_A{}^B \left[ \mathbf{e}_B(\mathbf{z}) - \delta_{BC} \epsilon^{CD} \overline{\lambda_D{}^i(\sigma(\mathbf{z}))} \mathbf{v}_i(\mathbf{z}) \right].
\]
Comparing with~\eqref{patchingmatrix} then gives the required equalities.
\end{proof}

\begin{remark}
We will be primarily interested in equation~\eqref{patchingmatrix} when it is restricted to a real line $L_{\mathbf{x}} \subset \cpone \setminus L_{\infty}$. Since the patching matrix, $G$, defined above is holomorphic on $\cpthree \setminus L_{\infty}$, when restricted to a neighbourhood of the line $L_{\mathbf{x}} \equiv L_{(u, v)}$, then $G$ will restrict to a function (which we denote by $G(\mathbf{x}, z)$) that is holomorphic in $(u-z\vbar, v+z\ubar, z)$ for $z \in \Veps$, for some $\epsilon > 0$.
\end{remark}

\section{One-parameter families of ADHM data}
\label{sec:deform}
We now consider a one-parameter family of ADHM data $A(\mathbf{z}) := A(t: \mathbf{z})$, with $t \in I$ a parameter, $I$ a sub-interval of the real line containing the origin. We assume that $A(t: \mathbf{z})$ is a $C^1$ function of $t$.

We wish to investigate how the elements of the above explicit construction depend on $A(t: \mathbf{z})$. The image $A(t: \mathbf{z}) \left( W \right)$ is now spanned by the vectors $\{ \mathbf{v}_i (t: \mathbf{z}) \}$, and $U_{\mathbf{z}}^{\perp} / U_{\mathbf{z}}$ is spanned by $\{ \mathbf{e}_A(t: \mathbf{z}) \}$, which we assume normalised such that~\eqref{prodee} is satisfied for each $t \in I$. Constructing the vectors $\{ \mathbf{f}_A(t: \mathbf{z}) \}$, we then define the patching matrix $G_A{}^B(t: \mathbf{z})$ and the functions $\lambda_A{}^i(t: \mathbf{z})$ as in~\eqref{patchingmatrix}.

\begin{proposition}
Given a one-parameter family of ADHM data, $A(t: \mathbf{z})$, and patching matrices as defined in~\eqref{patchingmatrix}, then there exists a matrix-valued function $d(t: z)$ with the property that
\begin{equation}
\dot{G}(t: \mathbf{z}) = d(t: \mathbf{z}) G(t: \mathbf{z}) + G(t: \mathbf{z}) d^*(t: \mathbf{z}).
\label{Gdot2}
\end{equation}
\end{proposition}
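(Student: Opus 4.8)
The plan is to derive \eqref{Gdot2} by differentiating the defining relation \eqref{patchingmatrix} in $t$ and re-expressing everything through the patching matrix itself, paralleling the structure of the infinitesimal symmetry \eqref{Gdot3} of Theorem~\ref{thm:2.2}, with $d$ playing the role there played by $-T$. First I would obtain a closed formula for $G$ in terms of the symplectic form: pairing \eqref{patchingmatrix} with $\mathbf{e}_C(\mathbf{z})$ and invoking the orthogonality \eqref{prodve} and the normalisation \eqref{prodee}, the $\mathbf{v}_i$-terms drop out and one is left with $\left( \mathbf{f}_A(\mathbf{z}), \mathbf{e}_C(\mathbf{z}) \right) = G_A{}^B(\mathbf{z}) \, \epsilon_{BC}$, so that $G$ is recovered by contracting with the inverse of the constant tensor $\epsilon$. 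This realises $G$ as a symplectic product of the two frames, each factor of which is now amenable to differentiation.

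Next I would differentiate, obtaining $\dot G_A{}^B \epsilon_{BC} = \left( \dot{\mathbf{f}}_A, \mathbf{e}_C \right) + \left( \mathbf{f}_A, \dot{\mathbf{e}}_C \right)$, and introduce the two frame-variation (\emph{logarithmic derivative}) matrices. Because the normalisations \eqref{prodee} and \eqref{prodff} hold for all $t$, differentiating them shows that $\left( \dot{\mathbf{e}}_A, \mathbf{e}_B \right)$ and $\left( \dot{\mathbf{f}}_A, \mathbf{f}_B \right)$ are each symmetric in their two indices, so they define matrix functions $e$ on $\mathcal{S}_0$ and $f$ on $\mathcal{S}_\infty$ via $\left( \dot{\mathbf{e}}_A, \mathbf{e}_C \right) = e_A{}^B \epsilon_{BC}$ and similarly for $f$. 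Since $\mathbf{e}_C, \mathbf{f}_C$ lie in $U_{\mathbf{z}}^\perp$, pairing with them annihilates any part of a variation lying in $U_{\mathbf{z}}$ or transverse to $U_{\mathbf{z}}^\perp$ (by \eqref{prodve}, \eqref{prodw}, \eqref{prodff}), so that $e$ and $f$ record precisely the rotation of the frames within the quotient $U_{\mathbf{z}}^\perp / U_{\mathbf{z}}$. Substituting $\mathbf{f}_A = G_A{}^B \mathbf{e}_B + \lambda_A{}^i \mathbf{v}_i$ into both pairings then assembles the \lq\lq quotient part\rq\rq\ of the variation into $\dot G = f \, G - G \, e$, which is already of the shape $d\,G + G\,d^*$ with $d := f$ and $d^* := -e$.

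To see that $f$ and $-e$ are genuinely related by the $*$-operation — so that the single function $d$ suffices — I would differentiate the reality relation \eqref{fdef}, which expresses $\mathbf{f}_A(\mathbf{z})$ through $\sigma_V\!\left( \mathbf{e}_C(\sigma(\mathbf{z})) \right)$. Since $\sigma_V$ is $t$-independent and anti-linear, and is compatible with the symplectic form in the sense that $\left( \sigma_V u, \sigma_V v \right) = \overline{\left( u, v \right)}$, the same constant $\delta$- and $\epsilon$-tensors that convert $\mathbf{e}$ into $\mathbf{f}$ convert $e(\sigma(\mathbf{z}))$ into $f(\mathbf{z})$, up to complex conjugation. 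Tracking these tensors should yield precisely $e(\mathbf{z}) = - f(\sigma(\mathbf{z}))^\dagger$, i.e. $-e = f^* = d^*$, which is exactly the reality needed to write \eqref{Gdot2} in terms of the single matrix $d = f$.

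The main obstacle is the contribution to $\dot G$ coming from the $k$-dimensional isotropic subspace $U_{\mathbf{z}}$: the terms carrying $\lambda_A{}^i$ together with the components of $\dot{\mathbf{v}}_i$ and of the frame-variations transverse to $U_{\mathbf{z}}^\perp$ do not obviously vanish, and the delicate point is to show that they too respect the $d\,G + G\,d^*$ pattern rather than leaving a residue. Here I would exploit the isotropy \eqref{prodvv}, the reality \eqref{vreality} of the spanning vectors $\mathbf{v}_i$, and the reality conditions on the $\lambda_A{}^i$ established in the preceding Proposition of Section~\ref{sec:patchingADHM}, to show that these $U_{\mathbf{z}}$-directed pieces amount to a modification of the $\mathbf{v}$-components of the frames — that is, a holomorphic change of basis of the type appearing in Theorem~\ref{thm:2.2} — and hence can be absorbed into a redefinition of $d$ (the analogue of absorbing $\rho_0, \rho_\infty$ to pass from \eqref{Gdot} to \eqref{Gdot3}). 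Establishing this absorption, and thereby ruling out any term outside the form \eqref{Gdot2}, is where the bulk of the honest computation lies.
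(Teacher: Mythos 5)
Your first two steps are correct and are, in essence, the paper's own computation. The paper parametrises the variations as $\dot{\mathbf{e}}_A = c_A{}^B \mathbf{e}_B + r_A{}^i \mathbf{v}_i + s_{Ai} \mathbf{w}^i$ and $\dot{\mathbf{f}}_A = d_A{}^B \mathbf{f}_B + t_A{}^i \mathbf{v}_i + u_{Ai} \mathbf{w}^i$, and differentiating \eqref{patchingmatrix} (equivalently, your symplectic pairing $G_A{}^B \epsilon_{BC} = \left( \mathbf{f}_A, \mathbf{e}_C \right)$) gives exactly \eqref{Gdot1},
\[
\dot{G}_A{}^B = d_A{}^C G_C{}^B - G_A{}^C c_C{}^B + \lambda_A{}^i \epsilon^{BC} s_{Ci};
\]
moreover, differentiating the reality relation \eqref{fdef} is also how the paper relates the two frame rotations, and your identity $f^* = -e$ (with $e = c$) is a correct consequence of it. One correction, though: your $f$, defined by $\left( \dot{\mathbf{f}}_A, \mathbf{f}_C \right) = f_A{}^B \epsilon_{BC}$, is \emph{not} the coefficient $d_A{}^B$ above, because $\left( \mathbf{w}^i, \mathbf{f}_C \right) = -\lambda_C{}^i \neq 0$ (the $\mathbf{w}^i$ are normalised against the $\mathbf{e}$-frame in \eqref{prodw}, not against the $\mathbf{f}$-frame); one finds $f_A{}^B = d_A{}^B - u_{Ai}\lambda^{Bi}$, with $\lambda^{Bi} := \epsilon^{BC}\lambda_C{}^i$. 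Consequently what your first two steps actually yield is
\[
\dot{G}_A{}^B = f_A{}^C G_C{}^B + G_A{}^C (f^*)_C{}^B + u_{Ai}\,\lambda^{Ci} G_C{}^B + \lambda_A{}^i \epsilon^{BC} s_{Ci},
\]
so the two trailing \lq\lq $U_{\mathbf{z}}$-directed\rq\rq\ terms are unavoidable.

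The genuine gap is your proposed treatment of those trailing terms, and it is a wrong idea rather than an omitted calculation. They cannot be interpreted as, or absorbed by, a modification of the $\mathbf{v}$-components of the frames: a shift $\mathbf{e}_A \mapsto \mathbf{e}_A + \mu_A{}^i \mathbf{v}_i$ (with the shift of $\mathbf{f}_A$ it induces through \eqref{fdef} and \eqref{vreality}) changes only $\lambda_A{}^i$ and leaves $G$ --- hence $\dot{G}$ --- completely unchanged, so it can neither produce nor remove terms in $\dot{G}$. Nor is there any $\rho_0$-, $\rho_{\infty}$-type freedom to invoke: the Proposition asserts \eqref{Gdot2} for the given family $G(t:\mathbf{z})$, so every term in $\dot{G}$ must be accounted for exactly, not up to equivalence. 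What is true --- and this is the actual content of the paper's proof --- is that the trailing terms are themselves already of the required form $\Delta G + G\Delta^*$ with $\Delta_A{}^B = u_{Ai}\lambda^{Bi}$. Indeed, differentiating \eqref{fdef} yields, in addition to your frame-rotation identity, the relation $u_{Ai} = \epsilon_{AB}\delta^{BC}\overline{s_{Ci}}$, and combining this with the reality conditions on $\lambda_A{}^i$ from the preceding Proposition (which involve $G$ itself) and with $\det G = 1$ gives $\lambda_A{}^i \epsilon^{BC} s_{Ci} = (G \Delta^*)_A{}^B$. The conclusion is that $d = f + u_i \otimes \lambda^i$, not $d = f$: the $U_{\mathbf{z}}$-directed data enter $d$ as an essential summand, and verifying the last identity through these reality relations is precisely the step your plan replaces by an absorption argument that would fail.
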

\begin{proof}
To investigate the $t$-dependence of these quantities, we consider their derivatives with respect to $t$. The derivatives of the relevant vectors are given as follows\footnote{Everything depends on $(t: \mathbf{z})$, but we drop explicit mention of this dependence for the moment}:
\begin{subequations}
\begin{align}
\dot{\mathbf{v}}_i &= A_i{}^j \mathbf{v}_j + B_{ij} \mathbf{w}^j + \epsilon^{AB} s_{Ai} \mathbf{e}_B,\\
\dot{\mathbf{w}}^i &= C^{ij} \mathbf{v}_j - A_j{}^i \mathbf{w}^j - \epsilon^{AB} r_A{}^i \mathbf{e}_B,\\
\dot{\mathbf{e}}_A &= c_A{}^B \mathbf{e}_B + r_A{}^i \mathbf{v}_i + s_{Ai} \mathbf{w}^i,
\end{align}\label{par1}\end{subequations}
where $A_i{}^j, \dots s_{Ai}$ are functions of $(t, \mathbf{z})$, that satisfy the relationships
\[
B_{ij} = B_{ji}, \quad
C^{ij} = C^{ji}, \quad
c_A{}^A = 0.
\]
It is straightforward to check that these are the most general forms of $\dot{\mathbf{v}}_i, \dot{\mathbf{w}}^i, \dot{\mathbf{e}}_A$ that preserve the relations~\eqref{prodvv}, \eqref{prodve}, \eqref{prodee} and~\eqref{prodw}.

We also define functions that characterise the time-dependence of the vector fields $\mathbf{f}_A$:
\begin{equation}
\dot{\mathbf{f}}_A = d_A{}^B \mathbf{f}_B + t_A{}^i \mathbf{v}_i + u_{Ai} \mathbf{w}^i.
\label{par2}
\end{equation}
{}From this expression and equation~\eqref{patchingmatrix}, we deduce that
\begin{equation}
\dot{G}_A{}^B = d_A{}^C G_C{}^B - G_A{}^C c_C{}^B + \lambda_A{}^i \epsilon^{BC} s_{Ci},
\label{Gdot1}
\end{equation}
along with the relations
\begin{align*}
\dot{\lambda}_A{}^i &= d_A{}^C \lambda_C{}^i + t_A{}^i - G_A{}^B r_B{}^i - \lambda_A^j A_j{}^i,\\
u_{Ai} &= G_A{}^B s_{Bi} + \lambda_A{}^j B_{ji}.
\end{align*}

Also, equating $\dot{\mathbf{f}}_1$ with $-\dot{{\overline{\mathbf{e}_2}}}$, and $\dot{\mathbf{f}}_2$ with $\dot{{\overline{\mathbf{e}_1}}}$, we find that
\[
d_A{}^B = u_{Ai} \lambda^{Bi} + \epsilon_{AC} \delta^{CD} {\overline{c_D{}^E}} \delta_{EF} \epsilon^{BF},
\]
and
\[
u_{Ai} = \epsilon_{AB} \delta^{BC} {\overline{s_{Ci}}}.
\]
These equations, along with~\eqref{Gdot1} imply that the $t$-derivative of the patching matrix obeys the relation~\eqref{Gdot2} with
\[
d = u_i \otimes \lambda^i + \epsilon_{AC} \delta^{CD} {\overline{c_D{}^E}} \delta_{EF} \epsilon^{BF},
\]
as required.
\end{proof}

\begin{remark}
The quantities that occur in equation~\eqref{Gdot2} may all be constructed directly from the vector fields $\mathbf{e}_A, \mathbf{v}_i$ since
\[
\left( \mathbf{v}_i, \dot{\mathbf{e}}_A \right) = s_{Ai}, \qquad
\left( \dot{\mathbf{e}}_A, \mathbf{e}_B \right) = \sum_C c_A{}^C \epsilon_{CB}.
\]
Therefore the construction does not actually require the introduction of the basis vectors $\{ \mathbf{w}^i \}$.
\end{remark}

\begin{corollary}
Given a one-parameter family of ADHM data and patching matrix defined as above, then there exists a map $d\colon I \times \mathbb{C}^2 \times \Veps \rightarrow \SLtwoC$ that is holomorphic in $(u-z\vbar, v+z\ubar, z)$ for $z \in \Veps$ such that the restriction of the patching matrix to real-lines $L_{\mathbf{x}}$ evolves according to
\begin{equation}
\dot{G}(t: \mathbf{x}, z) = d(t: \mathbf{x}, z) G(t: \mathbf{x}, z) + G(t: \mathbf{x}, z) d^*(t: \mathbf{x}, z),
\label{Gdotline}
\end{equation}
for $(\mathbf{x}, z) \in \mathbb{C}^2 \times \Veps$.
\end{corollary}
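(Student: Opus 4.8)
The plan is to treat this as a holomorphy upgrade of the identity \eqref{Gdot2}: the preceding Proposition has already produced, as functions of the homogeneous coordinate $\mathbf{z}$ on $\mathcal{S}_0 \cap \mathcal{S}_\infty$, an explicit matrix $d$ with $\dot{G} = dG + Gd^*$, and all that remains is to restrict this relation to a neighbourhood of a real line $L_{\mathbf{x}}$ and verify that the restricted $d$ has the stated analyticity. Concretely, I would use the coordinates $(u - z\vbar, v + z\ubar, z)$ adapted to the complex structure on $\mathcal{U}_1$, so that any function holomorphic in $\mathbf{z}$ on $\mathcal{S}_0 \cap \mathcal{S}_\infty$ restricts on $L_{\mathbf{x}}$ to a function holomorphic in these coordinates for $z \in \Veps$. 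Under this restriction \eqref{Gdot2} becomes exactly \eqref{Gdotline}, and the $C^1$ dependence on $t \in I$ is inherited from the assumed $C^1$ dependence of $A(t:\mathbf{z})$.

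The first substantive step is to record that the frame $\{\mathbf{e}_A(\mathbf{z}), \mathbf{v}_i(\mathbf{z})\}$ may be taken holomorphic on $\mathcal{S}_0$: the $\mathbf{v}_i(\mathbf{z}) = A(\mathbf{z})\mathbf{a}_i$ are linear, hence holomorphic, in $\mathbf{z}$, while the $\mathbf{e}_A$ form a holomorphic frame for $U_{\mathbf{z}}^\perp / U_{\mathbf{z}}$ over the holomorphically trivial region $\mathcal{S}_0 \cong \mathbb{C}^3$. Contracting $\dot{\mathbf{e}}_A$ in \eqref{par1} against this frame with the (holomorphic, bilinear) symplectic form then shows that the coefficients $c_A{}^B$ and $s_{Ai}$ are holomorphic in $\mathbf{z}$ on $\mathcal{S}_0$, i.e.\ for $|z| < 1 + \epsilon$. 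Solving the patching relation \eqref{patchingmatrix} for $\lambda_A{}^i$ on the overlap likewise expresses it through holomorphic data, so $\lambda_A{}^i$ is holomorphic on $\mathcal{S}_0 \cap \mathcal{S}_\infty$, in particular on $\mathbb{C}^2 \times \Veps$.

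The delicate point, which I expect to be the main obstacle, is the apparent presence of complex conjugates in $d = u_i \otimes \lambda^i + \epsilon_{AC}\delta^{CD}\overline{c_D{}^E}\delta_{EF}\epsilon^{BF}$, since $u_{Ai}$ is built from $\overline{s}$ and the second term from $\overline{c}$; taken naively these would be anti-holomorphic and would ruin the claim. Here one must keep track of the fact that the bars are not ordinary conjugation at $\mathbf{z}$ but the $\sigma_V$-twisted conjugation forced by the definition $\mathbf{f}_A(\mathbf{z}) = -\delta_{AB}\epsilon^{BC}\sigma_V(\mathbf{e}_C(\sigma(\mathbf{z})))$. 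Differentiating this in $t$ and applying the anti-linear $\sigma_V$ shows that each barred coefficient is the complex conjugate of a holomorphic quantity evaluated at the \emph{antipodal} point $\sigma(\mathbf{z})$, e.g.\ $u_{Ai}$ is proportional to $\overline{s_{Ci}(\sigma(\mathbf{z}))}$ and the second term to $\overline{c_D{}^E(\sigma(\mathbf{z}))}$. Since $\sigma$ is anti-holomorphic, composition of evaluation at $\sigma(\mathbf{z})$ with complex conjugation is again holomorphic in $\mathbf{z}$, and because $\sigma$ maps $\Vinfeps$ into $\Vzeroeps$ these quantities are holomorphic for $z \in \Vinfeps$. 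Combining this with the holomorphy of $\lambda^{Bi}$ on $\Veps$ shows that $d$ is holomorphic on $\mathbb{C}^2 \times \Veps$ in the coordinates $(u - z\vbar, v + z\ubar, z)$.

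Finally I would observe that once $d$ is holomorphic on the annulus $\Veps$, its twisted conjugate $d^*(\mathbf{x}, z) = d(\mathbf{x}, \sigma(z))^\dagger$ is automatically holomorphic there as well, again by the anti-holomorphicity of $\sigma$ combined with the conjugate-transpose. Assembling these holomorphy statements and restricting \eqref{Gdot2} to $L_{\mathbf{x}}$ then yields \eqref{Gdotline} with $d$ enjoying the asserted properties. The crux of the argument is thus the recognition, in the third paragraph, that the conjugations appearing in $d$ are of the $\sigma$-twisted type and so preserve rather than destroy holomorphy on $\Veps$.
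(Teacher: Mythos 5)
Your proposal takes essentially the same route as the paper: the paper's entire proof of this Corollary is the single instruction \lq\lq Restrict~\eqref{Gdot2} to $L_{\mathbf{x}}$\rq\rq, which is exactly your core step of restricting the Proposition's identity to a neighbourhood of a real line. Your further verification that the barred coefficients in $d$ are $\sigma$-twisted conjugates (conjugation composed with evaluation at $\sigma(\mathbf{z})$, hence holomorphic for $z \in \Veps$) is a correct elaboration of the holomorphy that the paper leaves implicit in its definition of the $\mathbf{f}_A$ and in the Remark on the holomorphy of $G$ near real lines.
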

\begin{proof}
Restrict~\eqref{Gdot2} to $L_{\mathbf{x}}$.
\end{proof}

\begin{remark}
Let $\alpha(t: \mathbf{x}, z)$ satisfy the first order ordinary differential equation
\[
\dot{\alpha}(t: \mathbf{x}, z) = d(t: \mathbf{x}, z) \alpha(t: \mathbf{x}, z), \qquad
\alpha(0: \mathbf{x}, z) = \Id.
\]
Given an initial patching matrix $G(\mathbf{x}, z)$, it follows that the one-parameter family of patching matrices
\begin{equation}
G(t: \mathbf{x}, z) := \alpha(t: \mathbf{x}, z) \, G(\mathbf{x}, z) \, \alpha^*(t: \mathbf{x}, z)
\label{Gnew}
\end{equation}
satisfies equation~\eqref{Gdot2} with initial conditions $G(0: \mathbf{x}, z) = G(\mathbf{x}, z)$. Conversely, by uniqueness of solutions of~\eqref{Gdot2}, it follows that $G(t: \mathbf{x}, z)$, as defined in equation~\eqref{Gnew}, is the unique one-parameter family of patching matrices determined by the flow~\eqref{Gdot2} with initial data $G(\mathbf{x}, z)$.
\end{remark}

\vs
Note that these transformation~\eqref{Gdotline} and~\eqref{Gnew} are of the same form as those generated by the symmetries of the {\sdyme} given in equation~\eqref{Gdot} and Theorem~\ref{thm:Crane}, with the important proviso that the function $d(t: x, z)$ occurring in~\eqref{Gdotline} depends explicitly on the parameter $t$. The symmetries~\eqref{Gdot} should be viewed as defining a flow on the space, $\mathcal{M}$, of self-dual connections defined by the map $T$. In solving~\eqref{Gdot}, we are simply constructing the integral curves of this flow, with $t$ a parameter along the integral curve. As such, in~\eqref{Gdot}, it is important that the function $T(x, z)$ is independent of the parameter $t$.

Viewing the function $T$ as defining a flow on $\mathcal{M}$ and the instanton moduli spaces $\mathcal{M}_k$ as submanifolds of $\mathcal{M}$, we directly deduce:

\begin{theorem}
Let $\mathbf{A} \in \mathcal{M}_k$ be a $k$-instanton self-dual connection (modulo gauge transformation) on $\mathbb{R}^4$, with $\mathcal{M}_k$ viewed as a submanifold of the space, $\mathcal{M}$, of all self-dual connections on $\mathbb{R}^4$. Then for each vector $\mathbf{v} \in T_{\mathbf{A}} \mathcal{M}_k$, there exists a function $T$ such that the fundamental vector field on $\mathcal{M}$ corresponding to $T$ via equation~\eqref{Gdot} consider with $\mathbf{v}$ at the point $\mathbf{A} \in \mathcal{M}_k$.
\label{TMk}
\end{theorem}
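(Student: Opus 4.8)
The plan is to realise the tangent vector $\mathbf{v}$ by a one-parameter family of ADHM data and then read off the required function $T$ from the induced flow on the patching matrix. First I would use the fact that the instanton moduli space $\mathcal{M}_k$ is parametrised, up to the equivalences of the ADHM construction, by ADHM data: any $\mathbf{v} \in T_{\mathbf{A}}\mathcal{M}_k$ is represented by a curve $t \mapsto \mathbf{A}(t)$ of $k$-instanton connections through $\mathbf{A} = \mathbf{A}(0)$, and this curve lifts to a $C^1$ one-parameter family of ADHM data $A(t:\mathbf{z})$ of the type studied in Section~\ref{sec:deform}, with $A(0:\mathbf{z})$ the data corresponding to $\mathbf{A}$.

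Next, applying the Proposition and Corollary of Section~\ref{sec:deform} to this family, the restriction of the patching matrix to real lines evolves according to~\eqref{Gdotline},
\[
\dot{G}(t:\mathbf{x},z) = d(t:\mathbf{x},z)\,G(t:\mathbf{x},z) + G(t:\mathbf{x},z)\,d^*(t:\mathbf{x},z),
\]
with $d(t:\mathbf{x},z)$ holomorphic in $(u-z\vbar,v+z\ubar,z)$ for $z\in\Veps$. Evaluating at $t=0$, i.e.\ at the point $\mathbf{A}$, and differentiating the ADHM-to-$G$ correspondence of Section~\ref{sec:patchingADHM} identifies the resulting deformation $\dot{G}(0:\mathbf{x},z)$ with the image of $\mathbf{v}$ under the Ward correspondence. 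I would then set $T(x,z) := -\,d(0:\mathbf{x},z)$. Because $d(0:\mathbf{x},z)$ is holomorphic in $(u-z\vbar,v+z\ubar,z)$, it is annihilated by the anti-holomorphic vector fields $\mathbf{X}(z) = \partial_{\ubar}-z\partial_v$ and $\mathbf{Y}(z)=\partial_{\vbar}+z\partial_u$ and is analytic in $z$ on $\Veps$; these are precisely the hypotheses imposed on $T$ in the Proposition preceding Theorem~\ref{thm:2.2}. The symmetry flow~\eqref{Gdot3} generated by this $T$ gives, at $\mathbf{A}$,
\[
\dot{G}(\mathbf{x},z) = -T(\mathbf{x},z)G(\mathbf{x},z) - G(\mathbf{x},z)T^*(\mathbf{x},z) = d(0:\mathbf{x},z)G(\mathbf{x},z) + G(\mathbf{x},z)d^*(0:\mathbf{x},z),
\]
which coincides identically with the ADHM deformation $\dot{G}(0:\mathbf{x},z)$ above. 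Since the Ward correspondence identifies deformations of the connection in $T_{\mathbf{A}}\mathcal{M}$ with deformations of the patching matrix, the fundamental vector field of $T$ and $\mathbf{v}$ induce the same $\dot{G}$ at $\mathbf{A}$ and hence agree as elements of $T_{\mathbf{A}}\mathcal{M}$.

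The main point to watch, and the conceptual crux, is the distinction in parameter dependence emphasised in the discussion preceding the theorem: the function $d(t:\mathbf{x},z)$ arising from the ADHM family genuinely depends on $t$, whereas a symmetry $T$ must be $t$-independent in order to define a bona fide flow on $\mathcal{M}$ via~\eqref{Gdot}. This is reconciled by observing that the theorem asserts only equality of tangent vectors at the single point $\mathbf{A}$, i.e.\ at $t=0$; freezing $d$ at $t=0$ produces a fixed $T$, and the constraints the Proposition places on $T$ restrict only its $z$-dependence and its annihilation by $\mathbf{X}(z)$ and $\mathbf{Y}(z)$, both of which survive the restriction to $t=0$. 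The only genuine verification required is therefore that $d(0:\mathbf{x},z)$ meets the analyticity and holomorphy demands placed on $T$, and this is exactly what the Corollary of Section~\ref{sec:deform} supplies.
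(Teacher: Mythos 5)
Your proposal is correct and follows essentially the same route as the paper: represent $\mathbf{v}$ by a one-parameter family of ADHM data, invoke the flow~\eqref{Gdotline} from Section~\ref{sec:deform}, and set $T(x,z) := -d(0:x,z)$ so that~\eqref{Gdot} (with $\rho_0 = \rho_\infty = 0$) reproduces $\dot{G}(0:\mathbf{x},z)$ and hence $\mathbf{v}$ at the point $\mathbf{A}$. Your additional checks---that $d(0:\mathbf{x},z)$ satisfies the holomorphy and analyticity hypotheses required of $T$, and that freezing $d$ at $t=0$ suffices because only pointwise equality of tangent vectors is claimed---are sound elaborations of what the paper leaves implicit.
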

\begin{proof}
Any element $\mathbf{v} \in T_{\mathbf{A}} \mathcal{M}_k$ is generated by a one-parameter family of ADHM data, $A(t: \mathbf{z})$, with $A(0: \mathbf{z})$ corresponding to the connection $\mathbf{A}$. This one-parameter family of ADHM data then gives rise to a one-parameter family of patching matrices $G(t: \mathbf{x}, z)$ evolving according to~\eqref{Gdotline}, where $G(0: \mathbf{x}, z)$ is the patching matrix corresponding to $\mathbf{A}$ and $\dot{G}(0: \mathbf{x}, z)$ corresponds to the tangent vector $\mathbf{v}$. Taking $T(x, z) := - d(0: x, z)$ gives a symmetry that, via~\eqref{Gdot} (with $\rho_0 = \rho_{\infty} = 0$) generates the tangent vector $\mathbf{v}$.
\end{proof}

\begin{remark}
\label{rem:absorbtion}
Theorem~\ref{thm:2.2} states that, given a function $T$, there is a corresponding fundamental vector field on $\mathcal{M}$, the space of self-dual connections, corresponding to $T$. We shall denote this fundamental vector field by $\mathbf{X}_T$. Theorem~\ref{TMk} states that, given a connection $\mathbf{A} \in \mathcal{M}_k$ and a tangent vector $\mathbf{v} \in T_{\mathbf{A}} \mathcal{M}_k$, then there exists such a function $T$ such that $\left. \mathbf{X}_T \right|_{\mathbf{A}} = \mathbf{v}$. It is important to note, however, that the integral curve of $\mathbf{X}_T$ starting at $\mathbf{A} \in \mathcal{M}_k$ will, generally, \emph{not\/} remain within the sub-manifold $\mathcal{M}_k$ of $\mathcal{M}$. In order to determine which one-parameter groups of symmetries gives flows that remain in the moduli space $\mathcal{M}_k$, we need to determine which transformations of the form~\eqref{Gdotline} are generated by transformations of the form~\eqref{Gdot}, with $T(x, z)$ independent of $t$.

From the form of~\eqref{Gdotline} and~\eqref{Gdot}, it appears natural to identify $d(t: x, z)$ with $-T(x, z) + \rho_{\infty}(t, x, z)$. We impose that $T$ is independent of $t$. The map $\rho_{\infty}$ simply generates a change of holomorphic frame for $z \in \Vinfeps$. At this point, we should recall that we have partially fixed our holomorphic frames in deriving our patching matrix from the ADHM data. As such, if we wish to employ our approach with one-parameter families of ADHM data, we must allow for one-parameter families of changes of frame in order to compensate for this fixing of frames. As such, we should allow $\rho_{\infty}$ to be $t$-dependent (i.e. $\rho_{\infty} = \rho_{\infty}(t, x, z)$). Note that such a $t$-dependent change of frame does not affect the corresponding self-dual connection $\mathbf{A}(t)$.

As such, we may use $\rho_{\infty}$ to absorb any part of $d(t: x, z)$ that is holomorphic on $\mathbb{C}^2 \times \Vinfeps$, leaving an irreducible part of $d(t: x, z)$, denoted $d_0(t: x, z)$, that has singularities in the region $\Vinfeps$ that cannot be removed by absorption into $\rho_{\infty}$. In order to arise from a symmetry of the {\sdyme}, $d_0(t: x, z)$ must then be independent of $t$. Since $d(t: x, z)$ is determined by first $t$-derivatives of the ADHM data, $A(t: \mathbf{z})$, imposing that $d_0(t: x, z)$ is constant in $t$ will impose conditions on the first $t$-derivatives of the $A(t: \mathbf{z})$ data that must be satisfied in order for this one-parameter family of data (and corresponding self-dual connections) to arise from a symmetry of the {\sdyme}. Explicit calculations, in the next section, suggest that these conditions are quite restrictive.
\end{remark}

\begin{remark}
The fact that the flow on the moduli space does not generally preserve the $L^2$ nature of the curvature is well-known (see, e.g., \cite{CGSW,CGW,CW} where this effect is mentioned). In~\cite[Chapter V]{Cr}, an explicit example of a transformation acting on a one-instanton patching matrix is given to demonstrate this phenomenon. In the notation of~\eqref{Gnew}, this transformation takes the form
\[
\alpha(t: x, z) = \frac{1}{\sqrt{1-t^2}} \begin{pmatrix} 1 &t/z \\ tz &1 \end{pmatrix}.
\]
From this expression, we deduce that
\[
d(t: x, z) = \frac{1}{\left( 1-t^2 \right)^{3/2}} \begin{pmatrix} t &1/z \\ z &t \end{pmatrix}.
\]
Following the programme of the previous remark, we then isolate the part of $d$ that has singularities in the region $z \in \Vinfeps$, namely
\[
d_0(t: x, z) = \frac{1}{\left( 1-t^2 \right)^{3/2}} \begin{pmatrix} 0 &0 \\ z &0 \end{pmatrix}.
\]
Since $d_0$ depends explicitly on $t$, we deduce that the counterexample provided in~\cite[Chapter V]{Cr} falls outside of the class of transformations generated by transformations~\eqref{Gdot} with $T$ independent of $t$.
\label{counterexample}
\end{remark}

\begin{remark}
If one drops the reality condition that our connections are {\SUtwo} connections, rather than {\SLtwoC} connections, then Takasaki has argued~\cite{T} that the action of the non-local symmetry group generated by transformations of the form
\[
\dot{J}(x) = \chi_{\infty}(x, \lambda) T(x, \lambda) \chi_{\infty}(x, \lambda)^{-1} \cdot J
\]
is transitive on the space of {\SLtwoC} solutions of the {\sdyme}. If, as here, we restrict to symmetries of the form~\eqref{Jsymmetry} that explicitly preserve the {\SUtwo} nature of the connection, then the symmetry group need not act transitively on the moduli space of solutions, even though the symmetries have been shown to generate the tangent space at each point. Moreover, if we explicitly impose that we only consider symmetries that preserve the $L^2$ nature of the connection, then the explicit calculations carried out in the next Section for the one-instanton moduli space suggest that the orbits of the symmetry group are actually of high codimension in the moduli space.
\end{remark}

\section{The one-instanton solution}
\label{sec:one}

In the case of a one-instanton solution, it is straightforward to carry out the ADHM construction and the construction of deformations explicitly. We find that the one-parameter subgroups of ADHM data with $d(t: x, z)$ of the form $-T(x, z) + \rho_{\infty}(t, x, z)$ are rather small.

First, we fix some notation. In the case $k=1$, then we may write
\[
\mathbf{v}(\mathbf{z}) := A(\mathbf{z}) = \begin{pmatrix} A_1(\mathbf{z}) \\ A_2(\mathbf{z}) \\ A_3(\mathbf{z}) \\ A_4(\mathbf{z}) \end{pmatrix} \in \mathbb{C}^4,
\]
where $A_i(\mathbf{z}) = \sum_{j=1}^4 A_i^j z_j, i = 1, \dots, 4$. Letting
\[
\sigma_V \begin{pmatrix}\alpha \\ \beta \\ \gamma \\ \delta \end{pmatrix} := \begin{pmatrix} - \overline{\beta} \\ \overline{\alpha} \\ - \overline{\delta} \\ \overline{\gamma} \end{pmatrix},
\]
then~\eqref{vreality} implies that the functions $A_i(\mathbf{z})$ must satisfy the reality conditions:
\begin{align*}
{\overline{A_1(\sigma(\mathbf{z}))}} &= - A_2(\mathbf{z}),
&{\overline{A_2(\sigma(\mathbf{z}))}} &= A_1(\mathbf{z}),
\\
{\overline{A_3(\sigma(\mathbf{z}))}} &= - A_4(\mathbf{z}),
&{\overline{A_4(\sigma(\mathbf{z}))}} &= A_3(\mathbf{z}).
\end{align*}
In particular, using the symmetry transformations inherent in the ADHM construction~\cite[Chapter~II]{At1}, we may fix
\begin{subequations}
\begin{align}
A_1(\mathbf{z}) &= \lambda z_1,
&A_2(\mathbf{z}) &= \lambda z_2,
\\
A_3(\mathbf{z}) &= \alpha z_1 - {\overline\beta} z_2 - z_3,
&A_4(\mathbf{z}) &= \beta z_1 + {\overline\alpha} z_2 - z_4,
\end{align}\label{lambdaetc}\end{subequations}
where $\lambda$ is a positive, real number and $\alpha, \beta$ are complex numbers. Finally, we may take the symplectic form on $V \cong \mathbb{C}^4$ to be
\[
\left( \mathbf{a}, \mathbf{b} \right) = a^1 b^2 - a^2 b^1 + a^3 b^4 - a^4 b^3, \qquad
\mathbf{a}, \mathbf{b} \in \mathbb{C}^4.
\]

\begin{theorem}
The only transformations of the ADHM data $(\lambda, \alpha, \beta)$ that arise from a non-local symmetry of the {\sdyme}~\eqref{Jsymmetry} according to~\eqref{Gdotline} with $d(t: x, z)$ of the form $-T(x, z) + \rho_{\infty}(t: x, z)$ are of the form
\begin{equation}
\lambda \mapsto \lambda(t) := \frac{\lambda}{\sqrt{1-k\lambda^2 t}}, \qquad \alpha, \beta \mbox{ constant},
\label{lambdatransformation}
\end{equation}
where $k \in \mathbb{R}$ is a real constant.
\label{scaling}
\end{theorem}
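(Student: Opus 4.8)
The plan is to carry out the rank-one ADHM construction of Section~\ref{sec:patchingADHM} completely explicitly, compute the deformation matrix $d(t:\mathbf{x},z)$ of \eqref{Gdot2}--\eqref{Gdotline} along a curve $(\lambda(t),\alpha(t),\beta(t))$ in the normalised data~\eqref{lambdaetc}, and then impose the condition of Remark~\ref{rem:absorbtion} that the part of $d$ which is singular on \Vinfeps\ be independent of $t$. First I would build the patching matrix. The single vector $\mathbf{v}(\mathbf{z})=A(\mathbf{z})$ spans $U_{\mathbf{z}}$, and on a real line $L_{\mathbf{x}}$ (setting $z_1=1$, $z_2=z$) one has $\mathbf{v}=(\lambda,\lambda z,A_3,A_4)$ with $A_3=(\alpha-u)+(\vbar-\overline{\beta})z$ and $A_4=(\beta-v)+(\overline{\alpha}-\ubar)z$. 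Solving $(\mathbf{v},\mathbf{e})=0$ gives a holomorphic frame $\mathbf{e}_1=(0,A_4/\lambda,1,0)$, $\mathbf{e}_2=(0,-A_3/\lambda,0,1)$ on $\mathcal{S}_0$ satisfying~\eqref{prodee}; applying $\mathbf{f}_1=-\sigma_V(\mathbf{e}_2(\sigma(\mathbf z)))$, $\mathbf{f}_2=\sigma_V(\mathbf{e}_1(\sigma(\mathbf z)))$ together with the reality conditions $A_3(\sigma(\mathbf z))=-\overline{A_4(\mathbf z)}$, $A_4(\sigma(\mathbf z))=\overline{A_3(\mathbf z)}$ produces $\mathbf{f}_1=(-A_4/(\lambda z),0,1,0)$, $\mathbf{f}_2=(A_3/(\lambda z),0,0,1)$. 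Expanding the $\mathbf{f}_A$ in $\{\mathbf{e}_B,\mathbf{v}\}$ as in~\eqref{patchingmatrix} yields
\begin{equation}
G(\mathbf{x},z)=I+\frac{1}{\lambda^2 z}M,\qquad M=\begin{pmatrix} A_3A_4 & A_4^2\\ -A_3^2 & -A_3A_4\end{pmatrix},
\end{equation}
with $\det G=1$ and $M=\mathbf{n}\,\mathbf{m}^{\top}$ nilpotent, since $\mathbf{m}^{\top}\mathbf{n}=0$.

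Next I would differentiate along $(\lambda(t),\alpha(t),\beta(t))$. Because the $\mathbf{f}_A$ are known explicitly, I can compute $\dot{\mathbf{f}}_A$ and read off the coefficients in~\eqref{par2} directly, which is equivalent to solving~\eqref{Gdot2} but sidesteps the $*$-operation. Using $M^2=0$ (so $G^{-1}=I-\tfrac{1}{\lambda^2z}M$) the cross terms collapse and one obtains the clean rank-one form
\begin{equation}
d(t:\mathbf{x},z)=\frac{1}{\lambda^2 z}\begin{pmatrix} \dot A_4-\tfrac{\dot\lambda}{\lambda}A_4\\ -\dot A_3+\tfrac{\dot\lambda}{\lambda}A_3\end{pmatrix}\begin{pmatrix} A_3 & A_4\end{pmatrix}.
\end{equation}
As $A_3,A_4$ are affine in $z$, the entries of $d$ are Laurent polynomials involving only $z^{-1},z^{0},z^{1}$. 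The key criterion is then that the decomposition $d=-T+\rho_\infty$ with $T$ independent of $t$ exists iff, on writing it at parameters $t$ and $0$ and subtracting, $d(t)-d(0)$ extends holomorphically over \Vinfeps; since only the power $z^{1}$ is singular at $z=\infty$, this holds precisely when the coefficient $d_0$ of $z^{1}$ is independent of $t$. Writing $A_3=a_3+b_3z$, $A_4=a_4+b_4z$ with $b_3=\vbar-\overline{\beta}$, $b_4=\overline{\alpha}-\ubar$, this non-absorbable part is
\begin{equation}
d_0=\frac{z}{\lambda^2}\begin{pmatrix} \dot b_4-\tfrac{\dot\lambda}{\lambda}b_4\\ -\dot b_3+\tfrac{\dot\lambda}{\lambda}b_3\end{pmatrix}\begin{pmatrix} b_3 & b_4\end{pmatrix}.
\end{equation}

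Finally I would impose $t$-independence of $d_0$ for every $\mathbf{x}$. Since $b_3,b_4$ are affine in $(\ubar,\vbar)$, each entry of $\lambda^{-2}d_0/z$ is a polynomial in $(\ubar,\vbar)$ whose coefficients must separately be $t$-independent. The coefficient of the leading monomial $\ubar\,\vbar$ equals $\dot\lambda/\lambda^3$, so $\dot\lambda/\lambda^3$ must be a constant $k/2$; integrating gives $\lambda^{-2}$ affine in $t$, i.e. $\lambda(t)=\lambda/\sqrt{1-k\lambda^{2}t}$. Feeding this back, the coefficients linear in $\ubar$ and in $\vbar$, collected across the matrix entries, reduce to multiples of $\overline{\alpha}/\lambda^{2}$ and $\overline{\beta}/\lambda^{2}$ (and of $\dot{\overline{\alpha}}/\lambda^{2}$, $\dot{\overline{\beta}}/\lambda^{2}$); since $\dot\lambda/\lambda^{3}$ is a nonzero constant, requiring these to be $t$-independent forces $\dot\alpha=\dot\beta=0$. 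The degenerate case $k=0$ ($\lambda$ constant) is handled identically and again yields $\alpha,\beta$ constant. This is exactly~\eqref{lambdatransformation}, and I would close by checking that the surviving scaling is genuinely realised: for $\dot\alpha=\dot\beta=0$ one finds $d=-\tfrac{\dot\lambda}{\lambda^3 z}M=-\tfrac{\dot\lambda}{\lambda}(G-I)$, whose $z^{1}$-part is indeed $t$-constant, so it arises from an honest $t$-independent $T$.

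The hard part will not be the final ODE matching but the bookkeeping that precedes it. The two delicate points are (i) the interplay of $\sigma$, $\sigma_V$ and complex conjugation in passing from the $\mathbf{e}_A$ to the $\mathbf{f}_A$ and thence to $d$ via \eqref{Gdot2} (where the reality conditions must be applied to cancel the conjugations so that $d$ comes out holomorphic), and (ii) the correct identification of the genuinely non-absorbable part of $d$. The latter is subtle because, on a real line, the twistor coordinates $w_1=u-z\vbar$ and $w_2=v+z\ubar$ are themselves $z$-dependent, so the naive Laurent split in $z$ and the holomorphy of $d$ in $(w_1,w_2,z)$ must be reconciled; the clean criterion ``$d(t)-d(0)$ holomorphic on \Vinfeps'' is what makes the extraction of $d_0$ unambiguous and robust. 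Once these are in hand, the nilpotence $M^2=0$ and the rank-one structure of $d$ do essentially all the computational work.
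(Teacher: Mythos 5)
Your proposal is correct and follows essentially the same route as the paper: the same explicit one-instanton frames and patching matrix, the same rank-one formula for $d(t:\mathbf{x},z)$, and the same absorption criterion of Remark~\ref{rem:absorbtion}, leading to $\dot\lambda/\lambda^3 = k/2$ and $\dot\alpha = \dot\beta = 0$ and hence to~\eqref{lambdatransformation}. The only difference is bookkeeping: the paper expands $d$ in the twistor monomials $(u-z\vbar)^i(v+z\ubar)^j z^k$ and absorbs the terms ($D,F,H,I$) that are holomorphic on $\mathbb{C}^2\times\Vinfeps$, whereas you extract the naive $z^1$ Laurent coefficient of $d(t)-d(0)$ and match monomials in $(\ubar,\vbar)$ --- these impose identical conditions, since that coefficient equals $A\vbar^2 - B\,\ubar\vbar + C\ubar^2 - E\vbar + G\ubar + J$, so its $t$-independence is exactly the $t$-independence of the paper's six non-absorbable coefficients.
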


\begin{proof}
On a region with $A_1(\mathbf{z}) \neq 0$ (and hence $A_2(\mathbf{z}) \neq 0$), then we find that $U_{\mathbf{z}} = \mathbf{v}(\mathbf{z})^{\perp} / \mathbf{v}$ is spanned by the vectors
\[
\mathbf{e}_1(\mathbf{z}) = \left( 0, \frac{A_4(\mathbf{z})}{A_1(\mathbf{z})}, 1, 0 \right), \qquad
\mathbf{e}_2(\mathbf{z}) = \left( 0, - \frac{A_3(\mathbf{z})}{A_1(\mathbf{z})}, 0, 1 \right),
\]
which have the property that $\left( \mathbf{e}_1, \mathbf{e}_2 \right) = 1$. Such a basis, including the normalisation property, is unique up to a translation $\mathbf{e}_A \mapsto \mathbf{e}_A + \lambda_A \mathbf{v}$, and an {\SLtwoC} rotation of the vectors $\mathbf{e}_A(\mathbf{z})$. Taking the conjugates of these vectors, we find that
\[
\mathbf{f}_1(\mathbf{z}) = - {\overline{\mathbf{e}_2(\mathbf{z})}} = \left( - \frac{A_4(\mathbf{z})}{A_2(\mathbf{z})}, 0, 1, 0 \right), \qquad
\mathbf{f}_2(\mathbf{z}) = {\overline{\mathbf{e}_1(\mathbf{z})}} = \left( \frac{A_3(\mathbf{z})}{A_2(\mathbf{z})}, 0, 0, 1 \right).
\]
These expressions imply that on the overlap where the two above regions overlap, we have the patching matrix (see~\cite[Chapter~V]{Cr})
\[
G = \begin{pmatrix} 1 + \frac{A_3(\mathbf{z}) A_4(\mathbf{z})}{A_1(\mathbf{z}) A_2(\mathbf{z})} & \frac{A_4(\mathbf{z})^2}{A_1(\mathbf{z}) A_2(\mathbf{z})} \\
- \frac{A_3(\mathbf{z})^2}{A_1(\mathbf{z}) A_2(\mathbf{z})} &1 - \frac{A_3(\mathbf{z}) A_4(\mathbf{z})}{A_1(\mathbf{z}) A_2(\mathbf{z})} \end{pmatrix}
\]
and
\[
\lambda_1(\mathbf{z}) = - \frac{A_4(\mathbf{z})}{A_1(\mathbf{z}) A_2(\mathbf{z})}, \qquad
\lambda_2(\mathbf{z})= \frac{A_3(\mathbf{z})}{A_1(\mathbf{z}) A_2(\mathbf{z})}.
\]
We may take the vector $\mathbf{w}(\mathbf{z})$ to be
\[
\mathbf{w} = \left( 0, \frac{1}{A_1(\mathbf{z})}, 0, 0 \right),
\]
which is unique up to $\mathbf{w} \mapsto \mathbf{w} + \phi \mathbf{v}$.

If we now let $\mathbf{v}(\mathbf{z})$ depend smoothly on a parameter $t \in \left( - \epsilon, \epsilon \right)$, then we may calculate the parameters of the deformation $A, B, C, \dots$ as defined in~\eqref{par1} and~\eqref{par2}. The parameter $d$ is the one that we primarily require and a straightforward calculation shows that
\commentout{We find that the only non-zero parameters are:
\begin{gather*}
A(t: \mathbf{z}) = \frac{A_1^{\prime}(t, \mathbf{z})}{A_1(t, \mathbf{z})},\\
B(t: \mathbf{z}) = A_1(t, \mathbf{z}) A_2^{\prime}(t, \mathbf{z}) - A_2(t, \mathbf{z}) A_1^{\prime}(t, \mathbf{z}) + A_3(t, \mathbf{z}) A_4^{\prime}(t, \mathbf{z}) - A_4(t, \mathbf{z}) A_3^{\prime}(t, \mathbf{z}),\\
C = 0,\\
c_A{}^B = 0,\\
r_A = 0,\\
s_1(t, \mathbf{z}) = A_1(t, \mathbf{z}) \left( \frac{A_4(t, \mathbf{z})}{A_1(t, \mathbf{z})} \right)^{\prime}, \qquad
s_2(t, \mathbf{z}) = - A_1(t, \mathbf{z}) \left( \frac{A_3(t, \mathbf{z})}{A_1(t, \mathbf{z})} \right)^{\prime},\\
t_1(t, \mathbf{z}) = - \frac{1}{A_1(t, \mathbf{z})} \left( \frac{A_4(t, \mathbf{z})}{A_2(t, \mathbf{z})} \right)^{\prime}, \qquad
t_2(t, \mathbf{z}) = \frac{1}{A_1(t, \mathbf{z})} \left( \frac{A_3(t, \mathbf{z})}{A_2(t, \mathbf{z})} \right)^{\prime},\\
u_1(t, \mathbf{z}) = A_2(t, \mathbf{z}) \left( \frac{A_4(t, \mathbf{z})}{A_2(t, \mathbf{z})} \right)^{\prime}, \qquad
u_2(t, \mathbf{z}) = - A_2(t, \mathbf{z}) \left( \frac{A_3(t, \mathbf{z})}{A_2(t, \mathbf{z})} \right)^{\prime},
\end{gather*}
It follows from these formulae that}
\[
d(t: \mathbf{z}) =
\frac{\partial}{\partial t}\! \begin{pmatrix} A_4(t: \mathbf{z}) / A_2(t: \mathbf{z}) \\ - A_3(t: \mathbf{z}) / A_2(t: \mathbf{z}) \end{pmatrix}
\times \begin{pmatrix} A_3(t: \mathbf{z}) / A_1(t: \mathbf{z}) & A_4(t: \mathbf{z}) / A_1(t: \mathbf{z}) \end{pmatrix}
\]
Taking $A_i(t: \mathbf{z})$ as in~\eqref{lambdaetc}, with $\lambda$ replaced by $\lambda(t)$, etc, then, restricted to the line $L_{\mathbf{x}}$, the deformation parameter that we require takes the form
\[
d(\mathbf{x}, z) = \frac{1}{z}
\frac{\partial}{\partial t}\!\begin{pmatrix}
\frac{(\beta - v) + z ({\overline\alpha} - {\ubar})}{\lambda}
\\
- \frac{(\alpha -u) - z ({\overline\beta} - {\vbar})}{\lambda}
\end{pmatrix}
\times \begin{pmatrix} \frac{(\alpha -u) - z ({\overline\beta} - {\overline b})}{\lambda}
& \frac{(\beta - v) + z ({\overline\alpha} - {\ubar})}{\lambda}
\end{pmatrix}.
\]
This expression may be written in the form
\begin{align*}
d(\mathbf{x}, z) =& \frac{1}{z}
\left[
A (u - z {\vbar})^2 + B (u - z {\vbar}) (v + z {\ubar}) + C (v + z {\ubar})^2
\right]
\\
&
+ \left( \frac{D}{z} + E \right) (u - z {\vbar})
+ \left( \frac{F}{z} + G \right) (v + z {\ubar})
+ \frac{H}{z} + I + J z,
\end{align*}
where
\begin{gather*}
A = \frac{\dot{\lambda}}{\lambda^3} \begin{pmatrix} 0 & 0 \\ 1 & 0 \end{pmatrix}
\qquad
B = \frac{\dot{\lambda}}{\lambda^3} \begin{pmatrix} -1 & 0 \\ 0 & 1 \end{pmatrix}
\qquad
C = \frac{\dot{\lambda}}{\lambda^3} \begin{pmatrix} 0 & -1 \\ 0 & 0 \end{pmatrix}
\\
D = \frac{1}{\lambda^3}
\begin{pmatrix}
 - \lambda \dot{\beta} + \beta \dot{\lambda} & 0
\\
\lambda \dot{\alpha} - 2 \alpha \dot{\lambda}
& - \beta \dot{\lambda}
\end{pmatrix}
\qquad
E = \frac{1}{\lambda^3}
\begin{pmatrix}
 - \lambda \dot{\overline\alpha} + {\overline\alpha} \dot{\lambda} & 0
\\
 - \lambda \dot{\overline\beta} + 2 {\overline\beta} \dot{\lambda}
& - {\overline\alpha} \dot{\lambda}
\end{pmatrix}
\\
F = \frac{1}{\lambda^3}
\begin{pmatrix}
\alpha \dot{\lambda} & - \lambda \dot{\beta} + 2 \beta \dot{\lambda}
\\
0 & \lambda \dot{\alpha} - \alpha \dot{\lambda}
\end{pmatrix}
\qquad
G = \frac{1}{\lambda^3}
\begin{pmatrix}
 - {\overline\beta} \dot{\lambda} & - \lambda \dot{\overline\alpha} + 2 {\overline\alpha} \dot{\lambda}
\\
0 & - \lambda \dot{\overline\beta} + {\overline\beta} \dot{\lambda}
\end{pmatrix}
\\
H = \frac{1}{\lambda^3}
\begin{pmatrix}
\alpha (\lambda \dot{\beta} - \beta \dot{\lambda})
& \beta (\lambda \dot{\beta} - \beta \dot{\lambda})
\\
\alpha ( - \lambda \dot{\alpha} + \alpha \dot{\lambda})
& \beta ( - \lambda \dot{\alpha} + \alpha \dot{\lambda})
\end{pmatrix}
\\
I = \frac{1}{\lambda^3}
\begin{pmatrix}
\alpha \lambda \dot{\overline\alpha} - {\overline\beta} \lambda \dot{\beta} - \alpha {\overline\alpha} \dot{\lambda} + \beta
 {\overline\beta} \dot{\lambda}
&
\beta \lambda \dot{\overline\alpha} + {\overline\alpha} \lambda \dot{\beta} - 2 {\overline\alpha} \beta \dot{\lambda}
\\
{\overline\beta} \lambda \dot{\alpha} + \alpha \lambda \dot{\overline\beta} - 2 \alpha {\overline\beta} \dot{\lambda}
& {\overline\alpha} ( - \lambda \dot{\alpha} + \alpha \dot{\lambda}) + \beta (\lambda \dot{\overline\beta} - {\overline\beta} \dot{\lambda})
\end{pmatrix}
\\
J = \frac{1}{\lambda^3}
\begin{pmatrix}
{\overline\beta} ( - \lambda \dot{\overline\alpha} + {\overline\alpha} \dot{\lambda})
& {\overline\alpha} (\lambda \dot{\overline\alpha} - {\overline\alpha} \dot{\lambda})
\\
{\overline\beta} ( - \lambda \dot{\overline\beta} + {\overline\beta} \dot{\lambda})
& {\overline\alpha} (\lambda \dot{\overline\beta} - {\overline\beta} \dot{\lambda}),
\end{pmatrix}
\end{gather*}
where $\dot{}$ denotes differentiation with respect to $t$.
\vs
According to the philosophy of Remark~\ref{rem:absorbtion}, we note that the coefficients $D$, $F$, $H$ and $I$ correspond to terms that are analytic for $z \in \Vinfeps$, and therefore may be absorbed into the $\rho_{\infty}$ term. The remaining part of the parameter $d$ is then
\begin{align*}
d_0(\mathbf{x}, z) &= \frac{1}{z}
\left[ A (u - z {\vbar})^2 + B (u - z {\vbar}) (v + z {\ubar}) + C (v + z {\ubar})^2
\right]
\\
&\hskip 2cm + E (u - z {\vbar}) +  G (v + z {\ubar}) + J z.
\end{align*}
All of the terms in $d_0$ have singularities at $z = \infty \in \Vinfeps$. In order for such transformations to arise from a $T$ that is independent of $t$ with $d = - T + \rho_{\infty}$, we therefore require that the remaining coefficients $A$, $B$, $C$, $E$, $G$ and $J$ must be independent of $t$ (i.e. constant). An analysis of the explicit form of these coefficients given above implies shows that this condition is only possible if
\[
\frac{\dot{\lambda}}{\lambda^3} = \frac{k}{2}, \qquad \dot{\alpha} = \dot{\beta} = 0,
\]
where $k$ is a constant. Integrating these equations yields~\eqref{lambdatransformation}. Therefore the only transformation on the one-instanton moduli space that arises from a symmetry of the form~\eqref{Jsymmetry} with $d(t: x, z) = - T(x, z) + \rho_{\infty}(t, x, z)$ is a scaling of the moduli space.
\end{proof}

\begin{remark}
The group of transformations on the one-instanton moduli space is therefore only one-dimensional. Such a collapse to a finite-dimensional action is familiar from the theory of harmonic maps (see, e.g.,~\cite{AJS,GO,JS,UhlenbeckJDG}), where the orbits of the group action are also, generically, of high codimension.
\end{remark}

\section{Final remarks}

Our first main result is Theorem~\ref{TMk}, which states that the tangent space to the instanton moduli spaces, $\mathcal{M}_k$, are generated by symmetries of the {\sdyme}. Nevertheless, our second main result, based on an analysis of the one-instanton moduli space, is that the subgroup of the symmetry group that preserves the $L^2$ nature of the connection, and hence has orbits that lie in a particular $\mathcal{M}_k$, is rather small. In particular, the orbits of this subgroup on the space $\mathcal{M}_k$ are of high codimension. We have restricted ourselves to one-parameter families of ADHM data that arise from transformations of the form~\eqref{Gdot} and~\eqref{Gdotline} with $d(t: x, z) = - T(x, z) + \rho_{\infty}(t, x, z)$. Note that this is a sufficient, but not necessary, condition for equations~\eqref{Gdot} and~\eqref{Gdotline} to be consistent. It is conceivable that there might be a larger group of transformations acting on the moduli spaces, $\mathcal{M}_k$, consistent with these equations, but we have not investigated this possibility.

It is hoped that there is a more elegant way of carrying out the calculations in the previous section. In particular (also regarding the remark in the previous paragraph) one would like to pull the infinitesimal action on the patching matrix~\eqref{Gdot} directly up to the space of ADHM data. An alternative approach to extending our analysis would be to investigate our approach from the point of view of Donaldson's reformulation of the ADHM construction~\cite{Don}, where one views instantons as defining holomorphic bundles over {\cptwo}. Restricting our constructions to the {\cptwo} picture is straightforward, but it is again to directly calculate the action of the symmetry transformations on the data. Work of Nakamura~\cite{N} concerning dynamical systems defined on the space of data of the Donaldson construction may be relevant in this regard. The approach where one might expect the symmetries to have the simplest form would be within Atiyah's reformulation~\cite{At2} of the instanton moduli spaces in terms of holomorphic maps $\cpone \rightarrow \Omega G$. In this case, the connection with harmonic map theory is quite strong. In the case of the {\sdyme}, however, one expects the symmetry group to act directly on the map in the Atiyah construction, whereas for harmonic maps the \lq\lq dressing action\rq\rq\ acts purely on the space $\Omega G$. It is also quite difficult to see directly how the action on the patching matrix or ADHM data transfer to the Atiyah picture, due to the non-holomorphic transformations required in passing from the ADHM construction to this approach.

More broadly, thinking of $(\lambda, \alpha, \beta)$ as coordinates on the five-dimensional ball (with $(\alpha, \beta)$ compactified to the four-sphere and $\lambda$ the radial coordinate) then the flow in~\eqref{lambdatransformation} is simply a radial scaling. In particular, for $k > 0$, the flow converges to the fixed point $\lambda = 0$ as $t \rightarrow -\infty$, and diverges to $+\infty$ as $t \rightarrow \left( \frac{1}{k \lambda^2} \right)-$. Such flows are, in some respects, reminiscent of Morse flows, and it would be of interest to know whether our approach has a Morse-theoretic interpretation. In addition, it would be interesting to relate our work to other examples of systems where one has a symmetry algebra, but no corresponding group action e.g. Teichm\"{u}ller theory\footnote{The author is grateful to Prof. K.\ Ono for this suggestion.}.

\vs
As mentioned in the Introduction, the original motivation for this work was to determine whether the integrable systems approach to the {\sdyme} could give information about instanton moduli spaces as used in the more topological context of Donaldson theory. In this regard, the results of this paper should be viewed alongside the results of the companion paper~\cite{reducible}. In~\cite{reducible}, reducible connections were studied on open subsets of $\mathbb{R}^4$, and were found to bear a strong resemblance to harmonic maps of finite type (see, e.g., \cite[Chapter~24]{Gu}). In particular, all reducible connections lie in the orbit, under flows~\eqref{Gdot}, of the flat connection on $\mathbb{R}^4$. Therefore instanton solutions on $\mathbb{R}^4$ and reducible connections (which are necessarily not $L^2$ on $\mathbb{R}^4$) appear to have quite different behaviour under the symmetry group of the {\sdyme}. Since reducible and irreducible connections play a different role in Donaldson's work~\cite{Don1}, corresponding to the smooth and singular parts of the moduli space respectively, it is striking that such connections also seem to have different behaviour from the point of view of integrable systems. In this respect, it would be of particular interest to investigate the one-instanton moduli space on {\cptwo}, where one has $L^2$ and reducible connections in the same moduli space.

\appendix
\section{Action of symmetries on the patching matrix}
\label{App:symm}

It appears that the direct derivation of the infinitesimal flow, \eqref{Gdot}, from the flow of the $J$-function, \eqref{Jsymmetry}, has not appeared in the literature. We therefore give a proof of this result here. The closest to our derivation that we have found is the corresponding construction for harmonic maps into Lie groups given in~\cite[\S3-4]{UhlenbeckJDG}.

\vs
For ease of notation, we define the quantities
\begin{subequations}
\begin{align}
\alpha(x, \lambda) &:= \Psi_{\infty}(x, \lambda) T(x, \lambda) \Psi_{\infty}(x, \lambda)^{-1},
\\
\alpha(x, \lambda)^{\dagger} &:= \Psi_0(x, \sigma(\lambda)) T(x, \lambda)^{\dagger} \Psi_0(x, \sigma(\lambda))^{-1},
\end{align}\label{alpha}\end{subequations}
and recall the solution of the linearisation equation, \eqref{Jsymmetry}, in this notation:
\begin{equation}
\dot{J}(x) = \psi_{\infty}(x)^{-1}
\left[ \alpha(x, \lambda) + \alpha(x, \lambda)^{\dagger} \right] \psi_0(x).
\label{Jsymmetry2}
\end{equation}

\begin{proposition}
There exists a function $h_{\infty}(x, z) \equiv h_{\infty}(u - z \vbar, v + z \ubar, z)$ with the property that
\begin{align}
&\dot{\Psi}_{\infty}(x, z) \Psi_{\infty}(x, z)^{-1} - \dot{\psi}_{\infty}(x) \psi_{\infty}(x)^{-1} =
\frac{\lambda}{\lambda - z} \left( \alpha(x, \lambda) - \alpha(x, z) \right)
\nonumber\\
&\hskip2cm+ \frac{1}{1 + z \overline{\lambda}} \left( \alpha(x, \lambda)^{\dagger} - \alpha(x, \sigma(z))^{\dagger} \right)
- \Psi_{\infty}(x, z) h_{\infty}(z) \Psi_{\infty}(x, z)^{-1},
\label{psiinfdot}
\end{align}
for all $z \in \cpone$ such that $z \neq 0, \lambda, \left. -1 \right/ \overline{\lambda}$. Similarly, there exists a function $h_0(x, z) \equiv h_0(u - z \vbar, v + z \ubar, z)$ such that
\begin{align}
&\dot{\Psi}_0(x, z) \Psi_0(x, z)^{-1} - \dot{\psi}_0(x) \psi_0(x)^{-1} =
\frac{z}{\lambda - z} \left( \alpha(x, \lambda) - \alpha(x, z) \right)
\nonumber\\
&\hskip2.2cm- \frac{z \overline{\lambda}}{1 + z \overline{\lambda}} \left( \alpha(x, \lambda)^{\dagger} - \alpha(x, \sigma(z))^{\dagger} \right)
+ \Psi_0(x, z) h_0(z) \Psi_0(x, z)^{-1},
\label{psi0dot}
\end{align}
for all $z \in \cpone$ such that $z \neq \infty, \lambda, \left. -1 \right/ \overline{\lambda}$.
\end{proposition}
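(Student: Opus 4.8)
The plan is to establish \eqref{psi0dot} and then deduce \eqref{psiinfdot} from it using the reality relation $\Psi_\infty = (\Psi_0^*)^{-1}$. Writing $\Psi_\infty(x,z) = \big(\Psi_0(x,\sigma(z))^\dagger\big)^{-1}$ and differentiating in $t$ gives the identity $\dot\Psi_\infty(x,z)\Psi_\infty(x,z)^{-1} = -\big(\dot\Psi_0(x,\sigma(z))\Psi_0(x,\sigma(z))^{-1}\big)^\dagger$, together with its boundary value $\dot\psi_\infty\psi_\infty^{-1} = -(\dot\psi_0\psi_0^{-1})^\dagger$. Applying $-(\,\cdot\,)^\dagger$ and the substitution $z\mapsto\sigma(z)$ to \eqref{psi0dot}, and using $\sigma^2=\Id$ together with the elementary identities relating the two kernels under this operation, interchanges the two rational terms of \eqref{psi0dot} with those of \eqref{psiinfdot} and sends $+\Psi_0 h_0\Psi_0^{-1}$ to $-\Psi_\infty h_\infty\Psi_\infty^{-1}$ with $h_\infty = h_0^*$. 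Since $*$ preserves holomorphicity in the twistor coordinates (this is exactly why $\Psi_\infty=\Psi_0^{*-1}$ is again a solution of \eqref{ALP}), $h_\infty$ is holomorphic whenever $h_0$ is, and the two statements are equivalent. I would therefore concentrate on \eqref{psi0dot}.

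To prove \eqref{psi0dot} I set $\Theta_0 := \dot\Psi_0\Psi_0^{-1}$, $\Theta_0^0 := \dot\psi_0\psi_0^{-1}$, and let $R_0(x,z)$ denote the two rational terms on the right of \eqref{psi0dot}. Since $\Psi_0$ solves \eqref{ALP} for a $t$-dependent connection, differentiating \eqref{ALP} in $t$ shows that $\Theta_0$ obeys $\mathbf{X}(z)\Theta_0 + [\,A_{\ubar}-zA_v,\,\Theta_0\,] = -(\dot A_{\ubar}-z\dot A_v)$, the analogous equation in the $\mathbf{Y}(z)$ direction, and $\partial_{\overline{z}}\Theta_0 = 0$. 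The maps $\alpha(x,z) = \Psi_\infty T\Psi_\infty^{-1}$ and $\alpha(x,\sigma(z))^\dagger = \Psi_0(x,z)\,T(x,\sigma(z))^\dagger\,\Psi_0(x,z)^{-1}$ are, by the hypotheses on $T$ and the reality structure, covariantly constant along their respective Lax directions and annihilated by $\partial_{\overline{z}}$; the one subtlety is that $\alpha(x,\lambda)$ is covariantly constant along $\mathbf{X}(\lambda)$ rather than $\mathbf{X}(z)$, so that $\mathbf{X}(z)\alpha(x,\lambda)$ acquires the extra term $(\lambda-z)\partial_v\alpha(x,\lambda)$. It is precisely this mismatch that is absorbed by the kernels $z/(\lambda-z)$ and $z\bar\lambda/(1+z\bar\lambda)$, while the subtracted pieces $-\alpha(x,z)$ and $+\alpha(x,\sigma(z))^\dagger$ render $R_0$ regular at $z=\lambda,\sigma(\lambda)$ and force $R_0(x,0)=0$, consistent with $\Theta_0(x,0)=\Theta_0^0$.

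The core of the argument is then to verify the source-matching, namely that $\Xi_0 := \Theta_0 - \Theta_0^0 - R_0$ satisfies $\mathbf{X}(z)\Xi_0 + [\,A_{\ubar}-zA_v,\,\Xi_0\,]=0$, the same with $\mathbf{Y}(z)$, and $\partial_{\overline{z}}\Xi_0=0$. Granting this, $h_0 := \Psi_0^{-1}\Xi_0\Psi_0$ is annihilated by $\mathbf{X}(z)$, $\mathbf{Y}(z)$ and $\partial_{\overline{z}}$, and is therefore holomorphic in $(u-z\vbar, v+z\ubar, z)$, since these vector fields span the anti-holomorphic directions on $\mathcal{U}_1$; reversing the conjugation yields \eqref{psi0dot}. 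The sourceless property of $\Xi_0$ reduces, after using covariant constancy of the $\alpha$-objects and the cross-terms above, to expressing $\dot A_{\ubar}-z\dot A_v$ through $\Theta_0^0$ and the residue data $\alpha(x,\lambda)$, $\alpha(x,\lambda)^\dagger$; this is in turn forced by the flow \eqref{Jsymmetry2} of $J=\psi_\infty^{-1}\psi_0$ via \eqref{A} and \eqref{Jdefn}, and restricting the identity to $z=0$ must reproduce \eqref{Jsymmetry2} as a consistency check.

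The main obstacle is exactly this source-matching computation: extracting $\dot{\mathbf{A}}$ (equivalently the boundary flows $\Theta_0^0$, $\Theta_\infty^0$) from the given $\dot J$ and checking that the two spectral kernels, with their reality partner, reproduce it with no leftover singularity at the boundary of $\Veps$. The bookkeeping of the $(\lambda-z)\partial_v\alpha$ cross-terms against the kernels is delicate, and it is here that the holomorphic freedom encoded in $h_0$ — the residual gauge $\Psi_0\mapsto\Psi_0 R$ — must be admitted in order to absorb the part of $\Theta_0$ that is regular in $z$ but not otherwise determined by $\dot J$.
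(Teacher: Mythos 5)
Your strategy is the same as the paper's: differentiate the linear system \eqref{ALP} in $t$, use the covariant constancy of $\alpha(x,\lambda)$ and $\alpha(x,\lambda)^{\dagger}$ along the Lax directions with parameter $\lambda$ (respectively $\sigma(\lambda)$) to rewrite the resulting source as $(D_{\ubar}-zD_v)$ and $(D_{\vbar}+zD_u)$ acting on the rational-kernel terms, and conclude that the remainder, conjugated by the frame, is annihilated by $\mathbf{X}(z)$, $\mathbf{Y}(z)$, $\partial_{\overline{z}}$ and hence holomorphic in $(u-z\vbar,v+z\ubar,z)$. Your reduction of \eqref{psiinfdot} to \eqref{psi0dot} via $\Psi_{\infty}=(\Psi_0^*)^{-1}$, with $h_{\infty}=h_0^*$, is correct (the kernels do interchange under $z\mapsto\sigma(z)$ followed by $-(\,\cdot\,)^{\dagger}$, and the excluded points match up); the paper instead proves \eqref{psiinfdot} directly and treats \eqref{psi0dot} as ``a similar argument'', invoking $h_{\infty}=h_0^*$ only in the proof of the theorem that follows, so this is a mild and legitimate reorganisation rather than a different method.

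The genuine defect is that you stop exactly where the proof must do its work: the ``source-matching'' property of $\Xi_0$ is asserted and labelled the main obstacle, but never verified, so as written this is a plan rather than a proof. The point to make is that it closes in a few lines from ingredients you have already named. From \eqref{A} one gets $\dot{A}_{\ubar}=-D_{\ubar}\Theta_0^0$ and $\dot{A}_v=-D_v\Theta_{\infty}^0$, where $\Theta_{\infty}^0:=\dot{\psi}_{\infty}\psi_{\infty}^{-1}$ (note the mixing: the barred components of $\mathbf{A}$ flow with $\psi_0$, the unbarred ones with $\psi_{\infty}$). Your differentiated Lax equation therefore gives
\begin{equation*}
(D_{\ubar}-zD_v)\left(\Theta_0-\Theta_0^0\right)
= D_{\ubar}\Theta_0^0 - zD_v\Theta_{\infty}^0 - (D_{\ubar}-zD_v)\Theta_0^0
= zD_v\!\left(\Theta_0^0-\Theta_{\infty}^0\right),
\end{equation*}
and differentiating \eqref{Jdefn} and comparing with \eqref{Jsymmetry2} yields $\Theta_0^0-\Theta_{\infty}^0=\alpha(x,\lambda)+\alpha(x,\lambda)^{\dagger}$, which is the paper's equation \eqref{a4}. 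Covariant constancy at parameter $\lambda$ then gives $zD_v\alpha(x,\lambda)=\frac{z}{\lambda-z}(D_{\ubar}-zD_v)\alpha(x,\lambda)$ and $zD_v\alpha(x,\lambda)^{\dagger}=-\frac{z\overline{\lambda}}{1+z\overline{\lambda}}(D_{\ubar}-zD_v)\alpha(x,\lambda)^{\dagger}$, while $(D_{\ubar}-zD_v)$ annihilates the subtracted pieces outright, since $\alpha(x,z)=\Psi_{\infty}(x,z)T(x,z)\Psi_{\infty}(x,z)^{-1}$ and $\alpha(x,\sigma(z))^{\dagger}=\Psi_0(x,z)T^*(x,z)\Psi_0(x,z)^{-1}$ with $T$, $T^*$ killed by $\mathbf{X}(z)$, $\mathbf{Y}(z)$. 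Hence $(D_{\ubar}-zD_v)\Xi_0=0$, the $\mathbf{Y}(z)$ direction is identical, and $\partial_{\overline{z}}\Xi_0=0$ is immediate; so $h_0=\Psi_0^{-1}\Xi_0\Psi_0$ is holomorphic and \eqref{psi0dot} follows. With this computation written out, your argument is complete and coincides with the paper's.
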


\begin{proof}
{}From~\eqref{Jsymmetry2}, we deduce that
\begin{equation}
\dot{\psi}_0(x) {\psi}_0(x)^{-1} - \dot{\psi}_{\infty}(x) {\psi}_{\infty}(x)^{-1} =
\alpha(x, \lambda) + \alpha(x, \lambda)^{\dagger}.
\label{a4}
\end{equation}
{}From the defining relations for $\psi_0(x, z), \psi_{\infty}(x, z)$ we deduce that the derivative of the components of the connection are given by
\begin{align*}
\left( \dot{A}_{\ubar} - z \dot{A}_v \right) &= - \left( D_{\ubar} - z D_v \right) \left( \dot{\Psi_0}(x, z) \Psi_0(x, z)^{-1} \right)
\\
&\hskip 3cm = - \left( D_{\ubar} - z D_v \right) \left( \dot{\Psi}_{\infty}(x, z) \Psi_{\infty}(x, z)^{-1} \right),
\\
\left( \dot{A}_{\vbar} + z \dot{A}_u \right) &= - \left( D_{\vbar} + z D_u \right) \left( \dot{\Psi_0}(x, z) \Psi_0(x, z)^{-1} \right)
\\
&\hskip 3cm = - \left( D_{\vbar} + z D_u \right) \left( \dot{\Psi}_{\infty}(x, z) \Psi_{\infty}(x, z)^{-1} \right).
\end{align*}
This expression implies that
\begin{align*}
\left( D_{\ubar} - z D_v \right) \left[ \dot{\Psi}_{\infty}(x, z) \Psi_{\infty}(x, z)^{-1} \right]
& = D_{\ubar} \left[ \dot{\psi}_0(x) \psi_0(x)^{-1} \right]
\\
&\hskip 2.5cm - z D_v \left[ \dot{\psi}_{\infty}(x) \psi_{\infty}(x)^{-1} \right],
\\
\left( D_{\vbar} + z D_u \right) \left[ \dot{\Psi}_{\infty}(x, z) \Psi_{\infty}(x, z)^{-1} \right]
& = D_{\vbar} \left[ \dot{\psi}_0(x) \psi_0(x)^{-1} \right]
\\
&\hskip 2.5cm + z D_u \left[ \dot{\psi}_{\infty}(x) \psi_{\infty}(x)^{-1} \right].
\end{align*}
We need to solve these equations for $\Psi_{\infty}(x, z)$ with the boundary condition that $\dot{\Psi}_{\infty}(x, z) \rightarrow \dot{\psi}_{\infty}(x)$ as $z \rightarrow \infty$.
These equations may be rewritten in the form
\begin{align*}
\left( D_{\ubar} - z D_v \right)\!\left[ \dot{\Psi}_{\infty}(x, z) \Psi_{\infty}(x, z)^{-1} - \dot{\psi}_{\infty}(x) \psi_{\infty}(x)^{-1} \right]
& = D_{\ubar}\!\left[ \alpha(x, \lambda) + \alpha(x, \lambda)^{\dagger} \right]\!,
\\
\left( D_{\vbar} + z D_u \right)\!\left[ \dot{\Psi}_{\infty}(x, z) \Psi_{\infty}(x, z)^{-1} - \dot{\psi}_{\infty}(x) \psi_{\infty}(x)^{-1} \right]
& = D_{\vbar}\!\left[ \alpha(x, \lambda) + \alpha(x, \lambda)^{\dagger} \right]\!.
\end{align*}
We now note that
\[
\left( D_{\ubar} - \lambda D_v \right) \alpha(x, \lambda) = \left( D_{\vbar} + \lambda D_u \right) \alpha(x, \lambda) = 0.
\]
Therefore, for all $z \neq \lambda$,
\begin{align*}
D_{\ubar} \alpha(x, \lambda) &=
\frac{\lambda}{\lambda - z} \left( D_{\ubar} - z D_v \right) \alpha(x, \lambda),
\\
D_{\vbar} \alpha(x, \lambda) &=
\frac{\lambda}{\lambda - z} \left( D_{\vbar} + z D_u \right) \alpha(x, \lambda).
\end{align*}
Similarly,
\[
\left( D_v + \overline{\lambda} D_{\ubar} \right) \alpha(x, \lambda)^{\dagger} =
\left( D_u - \overline{\lambda} D_{\vbar} \right) \alpha(x, \lambda)^{\dagger} = 0,
\]
from which we deduce that, for all $z \neq \left. - 1 \right/ \overline{\lambda}$,
\begin{align*}
D_{\ubar} \alpha(x, \lambda)^{\dagger} &=
\frac{1}{1 + z \overline{\lambda}} \left( D_{\ubar} - z D_v \right) \alpha(x, \lambda)^{\dagger},
\\
D_{\vbar} \alpha(x, \lambda)^{\dagger} &=
\frac{1}{1 + z \overline{\lambda}} \left( D_{\vbar} + z D_u \right) \alpha(x, \lambda)^{\dagger}.
\end{align*}
Hence,
\begin{align*}
&\left( D_{\ubar} - z D_v \right)
\left[ \dot{\Psi}_{\infty}(x, z) \Psi_{\infty}(x, z)^{-1} - \dot{\psi}_{\infty}(x) \psi_{\infty}(x)^{-1} \right.
\\
&\hskip 6cm \left. - \frac{\lambda}{\lambda - z} \alpha(x, \lambda) - \frac{1}{1 + z \overline{\lambda}} \alpha(x, \lambda)^{\dagger} \right] = 0,
\end{align*}
and, similarly, $\left( D_{\vbar} + z D_u \right) \left[ \dots \right] = 0$. It then follows that there exists a function $H_{\infty}(u-z\vbar, v+z\ubar, z)$ with the property that
\begin{align}
\dot{\Psi}_{\infty}(x, z) \Psi_{\infty}(x, z)^{-1} - \dot{\psi}_{\infty}(x) &\psi_{\infty}(x)^{-1} =
\frac{\lambda}{\lambda - z} \alpha(x, \lambda)
\nonumber\\
&+ \frac{1}{1 + z \overline{\lambda}} \alpha(x, \lambda)^{\dagger}  - \Psi_{\infty}(x, z) H_{\infty}(z) \Psi_{\infty}(x, z)^{-1},
\label{poled}
\end{align}
Taking
\[
H_{\infty}(x, z) = h_{\infty}(x, z) - \Psi_{\infty}(x, z)^{-1} \left[ \frac{\lambda}{\lambda - z} \alpha(x, z) + \frac{1}{1 + z \overline{\lambda}} \alpha(x, \sigma(z))^{\dagger} \right] \Psi_{\infty}(x, z)
\]
cancels the poles in the first two terms in the right-hand-side of~\eqref{poled}, and yields Equation~\eqref{psiinfdot}. A similar argument for $\Psi_0(x, z)$ yields equation~\eqref{psi0dot}.
\end{proof}

\begin{lemma}
\[
\dot{G}(z) = T(z) G(z) + G(z)  T^*(z) + h_{\infty}(z) G(x, z) + G(x, z) h_0(z).
\]
\label{lem:Gdot}
\end{lemma}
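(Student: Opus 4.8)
The plan is to differentiate the definition \eqref{G}, $G(x,z) = \Psi_\infty(x,z)^{-1}\Psi_0(x,z)$, and then feed in the two formulae of the preceding Proposition. First I would apply the product rule, using $\tfrac{d}{dt}\Psi_\infty^{-1} = -\Psi_\infty^{-1}\dot\Psi_\infty\Psi_\infty^{-1}$, to write
\[
\dot{G}(x,z) = \Psi_\infty(x,z)^{-1}\left[\dot\Psi_0(x,z)\Psi_0(x,z)^{-1} - \dot\Psi_\infty(x,z)\Psi_\infty(x,z)^{-1}\right]\Psi_0(x,z).
\]
The bracketed expression is precisely the combination that \eqref{psiinfdot} and \eqref{psi0dot} control, once the auxiliary terms $\dot\psi_0\psi_0^{-1}$ and $\dot\psi_\infty\psi_\infty^{-1}$ are reinstated by means of \eqref{a4}.

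Next I would subtract \eqref{psiinfdot} from \eqref{psi0dot} and add the quantity $\dot\psi_0\psi_0^{-1} - \dot\psi_\infty\psi_\infty^{-1} = \alpha(x,\lambda) + \alpha(x,\lambda)^\dagger$ furnished by \eqref{a4}. The decisive point is that the rational prefactors conspire to give integers,
\[
\frac{z}{\lambda-z} - \frac{\lambda}{\lambda-z} = -1, \qquad -\frac{z\overline{\lambda}}{1+z\overline{\lambda}} - \frac{1}{1+z\overline{\lambda}} = -1,
\]
so that the terms carrying $\alpha(x,\lambda)$ and $\alpha(x,\lambda)^\dagger$ cancel exactly against their counterparts coming from \eqref{a4}. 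In particular all explicit dependence on the spectral parameter $\lambda$ drops out, leaving
\[
\dot\Psi_0\Psi_0^{-1} - \dot\Psi_\infty\Psi_\infty^{-1} = \alpha(x,z) + \alpha(x,\sigma(z))^\dagger + \Psi_0(x,z)h_0(z)\Psi_0(x,z)^{-1} + \Psi_\infty(x,z)h_\infty(z)\Psi_\infty(x,z)^{-1}.
\]

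Finally I would conjugate by $\Psi_\infty^{-1}(\,\cdot\,)\Psi_0$ and unwind the definitions \eqref{alpha}. Writing $\alpha(x,z) = \Psi_\infty T\Psi_\infty^{-1}$, and noting that $\sigma$ is an involution, so that $\alpha(x,\sigma(z))^\dagger = \Psi_0(x,z)\,T^*(x,z)\,\Psi_0(x,z)^{-1}$, each of the four summands collapses upon cancelling an adjacent factor $\Psi_\infty^{-1}\Psi_\infty$ or $\Psi_0\Psi_0^{-1}$, producing $T(z)G(z)$, $G(z)T^*(z)$, $h_\infty(z)G(x,z)$ and $G(x,z)h_0(z)$ respectively. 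Summing these yields the claimed identity.

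I expect the genuine content to lie in the cancellation of the second step rather than in the final conjugation, which is purely mechanical. The care required is in tracking the pole-subtraction pieces that were absorbed into $h_0$ and $h_\infty$ when passing from $H_0, H_\infty$ to $h_0, h_\infty$ in the Proposition, and in confirming that the residue left behind is exactly the $\lambda$-independent pair $\alpha(x,z) + \alpha(x,\sigma(z))^\dagger$. This $\lambda$-independence is the same phenomenon emphasised after Theorem~\ref{thm:2.2}, and it is what ultimately makes the flow \eqref{Gdot} depend only on the function $T$.
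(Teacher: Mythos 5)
Your proposal is correct and follows exactly the paper's route: differentiate $G = \Psi_{\infty}^{-1}\Psi_0$, then combine equations~\eqref{psiinfdot}, \eqref{psi0dot}, \eqref{a4} and the definitions~\eqref{alpha}, with the rational prefactors summing to $-1$ so that all $\lambda$-dependence cancels. You have simply written out in full the computation that the paper's proof compresses into the sentence ``Now use equations~\eqref{alpha}, \eqref{psiinfdot}, \eqref{psi0dot} and~\eqref{a4}.''
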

\begin{proof}
Firstly,
\begin{align*}
\dot{G}(z) &= \frac{\partial}{\partial s} \left[ \Psi_{\infty}(x, z)^{-1} \cdot \Psi_0(x, z) \right]
\\
&= \Psi_{\infty}(x, z)^{-1} \left[ \dot{\Psi}_0(x, z) \cdot \Psi_0(x, z)^{-1} - \dot{\Psi}_{\infty}(x, z) \cdot \Psi_{\infty}(x, z)^{-1} \right] \Psi_0(x, z).
\end{align*}
Now use equations~\eqref{alpha}, \eqref{psiinfdot}, \eqref{psi0dot} and~\eqref{a4}.
\end{proof}

The left-hand-side of~\eqref{psi0dot} is analytic for $|z| < 1 + \epsilon$. Any singularities in this region that occur in the first two terms on the right-hand-side must therefore be cancelled by corresponding singularities in the function $h_0$. It turns out that this consideration is enough to determine $h_0$ up to addition of a function of $(u-z\vbar, v+z\ubar, z)$ that is holomorphic on the region $|z| < 1 + \epsilon$. Similar remarks apply to $h_{\infty}$ and equation~\eqref{psiinfdot}.

\begin{proposition}
There exists a function ${\rho}_0(x, z) \equiv {\rho}_0(u - z \vbar, v + z \ubar, z)$, holomorphic for $|z| < 1 + \epsilon$ with the property that on the region $\frac{1}{1 + \epsilon} < |z| < 1 + \epsilon$ we have
\begin{align}
&\dot{\Psi}_0(x, z) \Psi_0(x, z)^{-1} - \dot{\psi}_0(x) \psi_0(x)^{-1} =
\frac{1}{\lambda - z} \left( z \alpha(x, \lambda) - \lambda \alpha(x, z) \right)
\nonumber\\
&\hskip 2cm - \frac{1}{1 + z \overline{\lambda}} \left( z \overline{\lambda} \alpha(x, \lambda)^{\dagger} + \alpha(x, \sigma(z))^{\dagger} \right)
+ \Psi_0(x, z) {\rho}_0(z) \Psi_0(x, z)^{-1}.
\label{psi0dot2}
\end{align}
\end{proposition}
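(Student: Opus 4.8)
The plan is to derive \eqref{psi0dot2} from equation \eqref{psi0dot} of the previous Proposition by an algebraic redistribution of the two non-local terms, and then to check that the leftover correction is of the stated form and holomorphic on the whole disc $\Vzeroeps$. First I would record the two elementary identities
\begin{align*}
\frac{z}{\lambda - z}\bigl( \alpha(x,\lambda) - \alpha(x,z) \bigr)
&= \frac{1}{\lambda - z}\bigl( z\,\alpha(x,\lambda) - \lambda\,\alpha(x,z) \bigr) + \alpha(x,z),
\\
- \frac{z\overline{\lambda}}{1 + z\overline{\lambda}}\bigl( \alpha(x,\lambda)^{\dagger} - \alpha(x,\sigma(z))^{\dagger} \bigr)
&= - \frac{1}{1 + z\overline{\lambda}}\bigl( z\overline{\lambda}\,\alpha(x,\lambda)^{\dagger} + \alpha(x,\sigma(z))^{\dagger} \bigr) + \alpha(x,\sigma(z))^{\dagger},
\end{align*}
each verified by clearing denominators. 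Substituting them into the right-hand side of \eqref{psi0dot} replaces its first two terms by exactly the first two terms of \eqref{psi0dot2}, at the cost of the extra summand $\alpha(x,z) + \alpha(x,\sigma(z))^{\dagger}$. Thus \eqref{psi0dot2} holds on $\Veps$ precisely when
\[
\Psi_0\,\rho_0\,\Psi_0^{-1} = \Psi_0\,h_0\,\Psi_0^{-1} + \alpha(x,z) + \alpha(x,\sigma(z))^{\dagger},
\]
with $h_0$ the function supplied by the previous Proposition.

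I would then simplify this defining relation. Since $G = \Psi_{\infty}^{-1}\Psi_0$, conjugation of $\alpha(x,z) = \Psi_{\infty} T \Psi_{\infty}^{-1}$ by $\Psi_0^{-1}$ gives $G^{-1}TG$; and from $\sigma^2 = \Id$, $\Psi_{\infty} = (\Psi_0^*)^{-1}$ and the definition of $T^*$ one finds $\alpha(x,\sigma(z))^{\dagger} = \Psi_0\,T^*\,\Psi_0^{-1}$, so that its conjugate by $\Psi_0^{-1}$ is simply $T^*$. Hence
\[
\rho_0 = G^{-1}\,T\,G + T^* + h_0 .
\]
To confirm that $\rho_0$ has the claimed form $\rho_0(u - z\vbar, v + z\ubar, z)$, I would observe that $G$, $T$ and $h_0$ are annihilated by $\mathbf{X}(z)$ and $\mathbf{Y}(z)$, and that $T^*$ is as well: conjugating $\mathbf{X}(w)T = \mathbf{Y}(w)T = 0$ and then setting $w = \sigma(z)$, so that $\overline{w} = -1/z$, turns these into $\mathbf{Y}(z)T^* = \mathbf{X}(z)T^* = 0$. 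Sums and products of functions of this form are again of this form, so $\rho_0$ is too.

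The main obstacle is the holomorphy of $\rho_0$ on all of $\Vzeroeps$, and here I would lean on the pole-cancellation principle stated just before the Proposition. Conjugating \eqref{psi0dot2} by $\Psi_0^{-1}$ expresses $\rho_0$ as $\Psi_0^{-1}\bigl(\dot{\Psi}_0\Psi_0^{-1} - \dot{\psi}_0\psi_0^{-1}\bigr)\Psi_0$ minus the conjugated non-local terms; the first piece is holomorphic on $\Vzeroeps$ and $\Psi_0$ is holomorphic and invertible there, so matters reduce to the singularities of the non-local terms inside the disc. The apparent poles at $z = \lambda$ and $z = -1/\overline{\lambda}$ are removable, since the numerators $z\alpha(x,\lambda) - \lambda\alpha(x,z)$ and $z\overline{\lambda}\alpha(x,\lambda)^{\dagger} + \alpha(x,\sigma(z))^{\dagger}$ vanish there (for the second, using $\sigma(-1/\overline{\lambda}) = \lambda$, so that $\alpha(x,\sigma(z))^{\dagger} = \alpha(x,\lambda)^{\dagger}$ at $z = -1/\overline{\lambda}$). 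The remaining, more delicate, point — which I expect to be the technical heart of the argument — is that the singular part of $h_0$ forced by \eqref{psi0dot} is precisely what is needed to cancel the disc-singularities of $G^{-1}TG + T^*$, leaving $\rho_0$ regular throughout $\Vzeroeps$. This cancellation is the whole reason for rearranging the coefficients of $\alpha(x,z)$ and $\alpha(x,\sigma(z))^{\dagger}$ so that they agree with those appearing in \eqref{psiinfdot}.
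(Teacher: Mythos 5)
Your algebraic reduction is correct, and it is in substance the same bookkeeping the paper uses to pass from \eqref{psi0dot} to \eqref{psi0dot2}: the two partial-fraction identities show that \eqref{psi0dot2} holds on $\Veps$ exactly when $\rho_0 = h_0 + G^{-1}TG + T^*$; the identifications $\Psi_0^{-1}\alpha(x,z)\Psi_0 = G^{-1}TG$ and $\alpha(x,\sigma(z))^{\dagger} = \Psi_0\,T^*\,\Psi_0^{-1}$ are right; and so are the verification that $T^*$ (hence $\rho_0$) is annihilated by $\mathbf{X}(z)$ and $\mathbf{Y}(z)$ and the removability of the poles at $z=\lambda$ and $z=\sigma(\lambda)$. (Your relation is in fact the corrected form of the paper's equation \eqref{h0}: the coefficients $\frac{z}{\lambda-z}$ and $-\frac{z\overline{\lambda}}{1+z\overline{\lambda}}$ printed there are inconsistent both with the claim that substituting \eqref{h0} into \eqref{psi0dot} yields \eqref{psi0dot2} and with the proof of the concluding Theorem; both coefficients must equal $-1$, which is exactly what your identities force.)

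The genuine gap is at the point you explicitly defer. Removability of the poles at $\lambda$ and $\sigma(\lambda)$ is the easy part; the real obstruction to holomorphy of $\rho_0$ on $\Vzeroeps$ is that $G^{-1}TG$ and $T^*$ --- equivalently $\alpha(x,z)$ and $\alpha(x,\sigma(z))^{\dagger}$ --- are defined only on the annulus $\Veps$, being built from $\Psi_{\infty}$, $T$ and $G$. They are not meromorphic functions on the disc with \lq\lq disc-singularities\rq\rq\ waiting to be cancelled; they have no continuation past $|z| = \frac{1}{1+\epsilon}$ at all. Consequently $\rho_0 := h_0 + G^{-1}TG + T^*$ is, as far as your argument shows, only analytic on $\Veps$, and with that weaker conclusion the Proposition is empty (the equation would merely define $\rho_0$), and the later absorption of $\rho_0$ into a holomorphic change of frame over $\mathcal{S}_0$ --- the point of the whole appendix --- would fail. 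Proving that this particular combination does extend across the inner circle is the entire substance of the paper's proof: it Laurent-splits $h_0 = h_0^{(0)} + h_0^{(\infty)}$, computes $h_0^{(\infty)}$ as an explicit Cauchy integral over the inner contour $\gamma_-$ involving only $T^*(w)$ and $G(w)^{-1}T(w)G(w)$, shows that $h_0^{(\infty)}$ (hence $h_0^{(0)}$) satisfies inhomogeneous twistor equations whose source $K(x)$ is independent of $z$, and then exhibits an explicit particular solution $\Phi$ (a contour integral over $\gamma_+$ plus simple-pole terms) so that $\rho_0$ is realised as a difference of functions that are controlled on the whole disc. None of this analysis --- which is what actually governs the behaviour of $h_0$ inside $\Vzeroeps$ --- appears in your proposal, so the proposal does not establish the statement.
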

\begin{proof}
Rearranging equation~\eqref{psi0dot} yields
\begin{align*}
h_0(z) &= \chi_0(z)^{-1} \dot{\chi}_0(z) - \frac{z}{\lambda - z} \Psi_0(z)^{-1} \left( \alpha(x, \lambda) - \alpha(x, z) \right) \Psi_0(z)
\\
&\hskip 2cm + \frac{z \overline{\lambda}}{1 + z \overline{\lambda}} \Psi_0(z)^{-1} \left( \alpha(x, \lambda)^{\dagger} - \alpha(x, \sigma(z))^{\dagger} \right) \Psi_0(z).
\end{align*}
Since $\Psi_0$ is analytic for $|z| < 1 + \epsilon$ and the poles at $z = \lambda, \sigma(\lambda)$ have been cancelled, it follows that $h_0$ is analytic for $\frac{1}{1 + \epsilon} < |z| < 1 + \epsilon$. We may therefore split $h_0(z) = h_0^{(0)}(z) + h_0^{(\infty)}(z)$ where $h_0^{(0)}$ is analytic for $|z| < 1 + \epsilon$ and $h_0^{(\infty)}$ is analytic for $|z| > \frac{1}{1 + \epsilon}$. For $|z| > \frac{1}{1 + \epsilon}$, we have
\[
h_0^{(\infty)}(z) = - \frac{1}{2 \pi i} \oint_{\gamma_-} \frac{h_0(w)}{w-z} dw,
\]
where $\gamma_- = \{ w \in \mathbb{C}: w = \frac{1}{1 + \epsilon'} \}$, where $\epsilon' < \epsilon$ is chosen such that $|z| > \frac{1}{1 + \epsilon'}$. Using the fact that $\chi$ and $\Psi_0$ are analytic for $|z| < \frac{1}{1 + \epsilon'}$, we find that
\[
h_0^{(\infty)}(z) = \frac{1}{2 \pi i} \oint_{\gamma_-} \frac{1}{w-z} \left[ \frac{w \overline{\lambda}}{1 + w \overline{\lambda}} T^*(w) - \frac{w}{\lambda - w} G(w)^{-1} T(w) G(w) \right] dw
\]
for $|z| > \frac{1}{1 + \epsilon'}$. Differentiating under the integral sign, we find that
\[
\left( \partial_{\ubar} - z \partial_v \right) h_0^{(\infty)}(z) = \partial_{\ubar} K(x), \qquad
\left( \partial_{\vbar} + z \partial_u \right) h_0^{(\infty)}(z) = \partial_{\vbar} K(x)
\]
where
\[
K(x) := \frac{1}{2 \pi i} \oint_{\gamma_-} \left[ \frac{1}{w - \sigma(\lambda)} T^*(w) + \frac{1}{w - \lambda} G(w)^{-1} T(w) G(w) \right] dw.
\]
Note that this expression is independent of $z$. In order to construct the function $h_0$, we must find a function $h_0^{(0)}$, holomorphic (in $z$) for $|z| < 1 + \epsilon'$ with the property that
\[
\left( \partial_{\ubar} - z \partial_v \right) h_0^{0}(z) = - \partial_{\ubar} K(x), \qquad
\left( \partial_{\vbar} + z \partial_u \right) h_0^{0}(z) = - \partial_{\vbar} K(x).
\]
To construct such a function, we define the contour $\gamma_+ = \{ w \in \mathbb{C}: |w| = 1 + \epsilon' \}$ and deduce that
\begin{align*}
K(x) &= \frac{1}{2 \pi i} \oint_{\gamma_+} \left[ \frac{1}{w - \sigma(\lambda)} T^*(w) + \frac{1}{w - \lambda} G(w)^{-1} T(w) G(w) \right] dw
\\
&\hskip 6cm- T^*(\sigma(\lambda)) - G(\lambda)^{-1} T(\lambda) G(\lambda).
\end{align*}
We then find that, for $|z| < 1 + \epsilon'$
\begin{align*}
- \partial_{\ubar} K(x) &= - \frac{1}{2 \pi i} \oint_{\gamma_+} \left[ \frac{1}{w - \sigma(\lambda)} \partial_{\ubar} T^*(w) + \frac{1}{w - \lambda} \partial_{\ubar} \left( G(w)^{-1} T(w) G(w) \right) \right] dw
\\
& \hskip 2cm + \partial_{\ubar} T^*(\sigma(\lambda)) + \partial_{\ubar} \left( G(\lambda)^{-1} T(\lambda) G(\lambda) \right)
\\
&= \left( \partial_{\ubar} - z \partial_v \right) \Phi(x, \lambda, z),
\end{align*}
where
\begin{align*}
\Phi(x, \lambda, z) &:= -  \frac{1}{2 \pi i} \oint_{\gamma_+} \!\frac{w}{w-z} \!\left[ \frac{1}{w - \sigma(\lambda)} T^*(w) + \frac{1}{w - \lambda} \!\left( G(w)^{-1} T(w) G(w) \right)\! \right] \!dw
\\
& \hskip 3cm + \frac{\sigma(\lambda)}{\sigma(\lambda)-z} T^*(\sigma(\lambda)) + \frac{\lambda}{\lambda - z} G(\lambda)^{-1} T(\lambda) G(\lambda),
\end{align*}
with a similar expression for $- \partial_{\ubar} K(x)$. Again cancelling the poles at $z = \lambda, \sigma(\lambda)$, we deduce that, for $|z| < 1 + \epsilon'$, we may take
\begin{align*}
h_0^{(0)}(z) =&  \rho_0(z) + \frac{\lambda}{\lambda - z} \left( G(\lambda)^{-1} T(\lambda) G(\lambda) - G(z)^{-1} T(z) G(z) \right)
\\
&- \frac{1}{2 \pi i} \oint_{\gamma_+} \frac{w}{w-z} \left[ \frac{1}{w - \sigma(\lambda)} T^*(w) + \frac{1}{w - \lambda} \left( G(w)^{-1} T(w) G(w) \right) \right] dw
\\
&+ \frac{\sigma(\lambda)}{\sigma(\lambda)-z} \left[ T^*(\sigma(\lambda)) - T^*(z) \right]  ,
\end{align*}
where $\rho_0 = \rho_0(u-z\vbar, v+\ubar, z)$ is analytic for $|z| < 1 + \epsilon'$. Finally, we note that, in the region $\frac{1}{1 + \epsilon} < |z| < 1 + \epsilon$ we have
\begin{equation}
h_0(z) = h_0^{(0)}(z) + h_0^{(\infty)}(z) = \frac{z}{\lambda - z} G(z)^{-1} T(z) G(z) - \frac{z \overline{\lambda}}{1 + z \overline{\lambda}} T^*(z) + \rho_0(z).
\label{h0}
\end{equation}
Substituting this expression into~\eqref{psi0dot} yields~\eqref{psi0dot2}.
\end{proof}

\begin{theorem}
On the region $\frac{1}{1 + \epsilon} < |z| < 1 + \epsilon$ we have
\[
\dot{G}(z) = - T(z) G(z) - G(z) T^*(z) + \rho_{\infty}(z) G(x, z) + G(x, z) \rho_0(z).
\]
\end{theorem}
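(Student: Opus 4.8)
The plan is to exploit the factorisation $G = \Psi_\infty^{-1}\Psi_0$ of~\eqref{G} exactly as at the start of the proof of Lemma~\ref{lem:Gdot}, so that
\[
\dot{G}(z) = \Psi_\infty(x,z)^{-1}\left[\dot{\Psi}_0(x,z)\Psi_0(x,z)^{-1} - \dot{\Psi}_\infty(x,z)\Psi_\infty(x,z)^{-1}\right]\Psi_0(x,z),
\]
and then to feed in the two \emph{cleaned} evolution equations for $\dot\Psi_0\Psi_0^{-1}$ and $\dot\Psi_\infty\Psi_\infty^{-1}$, in which the singular terms $h_0, h_\infty$ of Lemma~\ref{lem:Gdot} have been replaced by functions $\rho_0, \rho_{\infty}$ holomorphic on $\Vzeroeps$ and $\Vinfeps$ respectively. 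The previous Proposition already provides the first of these, namely~\eqref{psi0dot2}; so the only genuinely new input is its mirror image for $\Psi_\infty$.

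First I would prove, by repeating the argument of the previous Proposition with the roles of the two regions interchanged, that there is a function $\rho_{\infty}(x,z) \equiv \rho_{\infty}(u-z\vbar, v+z\ubar, z)$, holomorphic for $|z| > \frac{1}{1+\epsilon}$, such that
\begin{align*}
\dot{\Psi}_\infty\Psi_\infty^{-1} - \dot{\psi}_\infty\psi_\infty^{-1}
={}& \frac{1}{\lambda-z}\left(\lambda\,\alpha(x,\lambda) - z\,\alpha(x,z)\right) \\
&+ \frac{1}{1+z\overline{\lambda}}\left(\alpha(x,\lambda)^\dagger + z\overline{\lambda}\,\alpha(x,\sigma(z))^\dagger\right) - \Psi_\infty\rho_{\infty}(z)\Psi_\infty^{-1}.
\end{align*}
Concretely, the left-hand side of~\eqref{psiinfdot} is analytic on $\Vinfeps$; splitting $h_\infty = h_\infty^{(0)} + h_\infty^{(\infty)}$ into its $\Vzeroeps$- and $\Vinfeps$-holomorphic parts, one computes the offending $\Vzeroeps$-holomorphic piece by a Cauchy integral over the outer circle $|w| = 1+\epsilon'$, cancels the resulting poles at $w = \lambda$ and $w = \sigma(\lambda)$, and absorbs the holomorphic remainder into $\rho_{\infty}$. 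This is the step I expect to be the main obstacle, since it is the one calculation not already in the text and it requires care with the orientation of the contour and with the $*$-operation, so that the coefficients come out as displayed rather than as in the naive rearrangement of~\eqref{psiinfdot}.

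With both clean identities available, I would subtract the $\Psi_\infty$ equation from~\eqref{psi0dot2} and use~\eqref{a4} to replace $\dot\psi_0\psi_0^{-1} - \dot\psi_\infty\psi_\infty^{-1}$ by $\alpha(x,\lambda) + \alpha(x,\lambda)^\dagger$. The coefficients are rigged so that every term carrying the parameter $\lambda$, namely $\alpha(x,\lambda)$ and $\alpha(x,\lambda)^\dagger$, cancels identically --- this is precisely the $\lambda$-independence flagged in the Remark after Theorem~\ref{thm:2.2} --- leaving
\[
\dot\Psi_0\Psi_0^{-1} - \dot\Psi_\infty\Psi_\infty^{-1} = -\alpha(x,z) - \alpha(x,\sigma(z))^\dagger + \Psi_0\rho_0\Psi_0^{-1} + \Psi_\infty\rho_{\infty}\Psi_\infty^{-1}.
\]
Conjugating this by $\Psi_\infty^{-1}(\,\cdot\,)\Psi_0$ and reading off each term from the definitions~\eqref{alpha} finishes the proof: since $\alpha(x,z) = \Psi_\infty T\Psi_\infty^{-1}$ one has $\Psi_\infty^{-1}\alpha(x,z)\Psi_0 = TG$, since $\alpha(x,\sigma(z))^\dagger = \Psi_0 T^{*}\Psi_0^{-1}$ (using $T(x,\sigma(z))^\dagger = T^{*}(x,z)$) one has $\Psi_\infty^{-1}\alpha(x,\sigma(z))^\dagger\Psi_0 = GT^{*}$, and the two $\rho$-terms become $\rho_{\infty} G$ and $G\rho_0$. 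Assembling these gives $\dot{G}(z) = -T(z)G(z) - G(z)T^{*}(z) + \rho_{\infty}(z)G(x,z) + G(x,z)\rho_0(z)$, in agreement with~\eqref{Gdot}.
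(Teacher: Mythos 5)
Your proposal is correct: the mirror identity you posit for $\dot{\Psi}_{\infty}\Psi_{\infty}^{-1}$ is exactly right, the $\lambda$-dependent terms do cancel against~\eqref{a4} in the subtraction, and the final conjugation by $\Psi_{\infty}^{-1}\left( \cdot \right)\Psi_0$ reads off the stated flow. But the step you flag as the main obstacle --- re-running the Cauchy-integral argument on the region $\Vinfeps$, with the attendant contour and $*$-operation bookkeeping, to produce $\rho_{\infty}$ --- is precisely the step the paper's proof avoids, and this is where the two routes genuinely differ. The paper's proof is two lines: the reality conditions $\Psi_{\infty} = \left( \Psi_0^* \right)^{-1}$ and $G^* = G$ imply $h_{\infty}(z) = h_0^*(z)$, so the explicit formula~\eqref{h0} for $h_0$ immediately yields the corresponding formula for $h_{\infty}$, with $\rho_{\infty} := \rho_0^*$ automatically holomorphic on $\Vinfeps$ because $\rho_0$ is holomorphic on $\Vzeroeps$; substituting into Lemma~\ref{lem:Gdot} then gives the theorem. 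Equivalently, your mirror identity needs no new analysis at all: it is the $*$-transform of~\eqref{psi0dot2}, obtained by evaluating at $\sigma(z)$ and taking $\dagger$, using $\left( \Psi_0(x,\sigma(z)) \right)^{\dagger} = \Psi_{\infty}(x,z)^{-1}$ (so the left-hand side becomes $-\left[ \dot{\Psi}_{\infty}\Psi_{\infty}^{-1} - \dot{\psi}_{\infty}\psi_{\infty}^{-1} \right]$) together with $\left( \alpha(x,\sigma(z))^{\dagger} \right)^{\dagger} = \alpha(x,z)$ and the conjugation of the rational coefficients in $z$. What each approach buys: yours is region-symmetric and never invokes the {\SUtwo} reality structure, so it would survive in a setting where $\Psi_0$ and $\Psi_{\infty}$ are unrelated solutions of~\eqref{ALP} (e.g.\ genuinely complex {\SLtwoC} connections); the paper's exploits that structure to halve the analytic work and to exhibit the relation $\rho_{\infty} = \rho_0^*$ explicitly, which is what makes its proof of this theorem essentially immediate.
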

\begin{proof}
The reality conditions for $\Psi_0$ and $\Psi_{\infty}$ imply that $h_{\infty}(z) = h_0^*(z)$. The result then follows from Lemma~\ref{lem:Gdot} and equation~\eqref{h0}.
\end{proof}

\begin{remark}
Since the functions $\rho_0, \rho_{\infty}$ are holomorphic in $(u-z\vbar, v+z\ubar, z)$ and analytic for $|z| < 1 + \epsilon$, $|z| > \frac{1}{1 + \epsilon}$, respectively, they simply generate holomorphic changes of basis on these regions. As such, modulo holomorphic changes of basis, the symmetry~\eqref{Jsymmetry} generates the flow
\[
\dot{G}(z) = - T(z) G(z) - G(z) T^*(z)
\]
for the patching matrix. Since $T$ is independent of $t$, the corresponding one-parameter group of transformations determined by $T$ with initial conditions the patching matrix $G_0(x, z)$ is of the form
\[
G(t; x, z) = \exp \left( - t T(x, z) \right) G_0(x, z) \exp \left( - t T^*(x, z) \right).
\]
\end{remark}

In particular, we recover the group action constructed on heuristic grounds by Crane~\cite{Cr}: Given a map $h\colon X \times S^1 \rightarrow \SLtwoC$ that extends to a holomorphic map $\tilde{h}\colon X \times \Veps \rightarrow  \SLtwoC$ (where holomorphic means with respect to the complex structure $X \times \Veps$ as a subset of $\cpthree$) then the group action on patching matrix is of the form
\[
G(x, z) \mapsto \left( h \cdot G \right)(x, z) := \tilde{h}(x, z) G(x, z) \tilde{h}^*(x, z).
\]

\def\cprime{$'$}

\end{document}